\DeclareMathOperator{\median}{\mbox{median}}
\DeclareMathOperator{\argmin}{\mbox{argmin}}
\DeclareMathOperator{\MD}{\mbox{MD}}
\DeclareMathOperator{\RD}{\mbox{RD}}
\DeclareMathOperator{\KL}{\mbox{KL}}
\DeclareMathOperator{\tr}{\mbox{tr}}
\newcommand{\eps}{\varepsilon}
\newcommand{\gs}{\geqslant}
\newcommand{\ls}{\leqslant}
\newcommand{\bzero}{\boldsymbol 0}
\newcommand{\bdot}{\boldsymbol{\cdot}}
\newcommand{\bu}{\boldsymbol u}
\newcommand{\bv}{\boldsymbol v}
\newcommand{\bz}{\boldsymbol z}
\newcommand{\btz}{\boldsymbol{\tilde{z}}}
\newcommand{\bA}{\boldsymbol A}
\newcommand{\bB}{\boldsymbol B}
\newcommand{\bC}{\boldsymbol C}
\newcommand{\bI}{\boldsymbol I}
\newcommand{\bW}{\boldsymbol W}
\newcommand{\bX}{\boldsymbol X}
\newcommand{\bY}{\boldsymbol Y}
\newcommand{\bZ}{\boldsymbol Z}
\newcommand{\bsX}{\boldsymbol{\dot{X}}}
\newcommand{\btX}{\boldsymbol{\tilde{X}}}
\newcommand{\btY}{\boldsymbol{\tilde{Y}}}
\newcommand{\bbeta}{\boldsymbol \beta}
\newcommand{\bdelta}{\boldsymbol \delta}
\newcommand{\bnabla}{\boldsymbol \nabla}
\newcommand{\btheta}{\boldsymbol \theta}
\newcommand{\bhbeta}{\boldsymbol{\hat{\beta}}}
\newcommand{\bhdelta}{\boldsymbol{\hat{\delta}}}
\newcommand{\bhmu}{\boldsymbol{\hat{\mu}}}
\newcommand{\bhtheta}{\boldsymbol{\hat{\theta}}}
\newcommand{\bmu}{\boldsymbol \mu}
\newcommand{\bSigma}{\boldsymbol \Sigma}
\newcommand{\bhSigma}{\boldsymbol{\widehat{\Sigma}}}
\newcommand{\btSigma}{\boldsymbol{\tilde{\Sigma}}}
\newcommand{\tih}{\tilde{h}}
\newtheorem{proposition}{Proposition}
\begin{document}

\def\spacingset#1{\renewcommand{\baselinestretch}%
{#1}\small\normalsize} \spacingset{1}


\title{\bf Handling cellwise outliers by sparse 
	     regression and robust covariance}
\author{Jakob Raymaekers and Peter J. 
	  Rousseeuw \hspace{.1cm} \\ \\
		Section of Statistics and Data Science,\\
		Department of Mathematics, KU Leuven, Belgium}
\date{December 7, 2020}
\maketitle

\begin{abstract}
We propose a data-analytic
method for detecting cellwise outliers. 
Given a robust covariance matrix, outlying 
cells (entries) in a row are found by the cellHandler 
technique which combines
lasso regression with a stepwise application of 
constructed cutoff values. 
The penalty term of the lasso has a physical 
interpretation as the total distance 
that suspicious cells need to move in order 
to bring their row into the fold. 
For estimating a cellwise robust covariance matrix 
we construct a detection-imputation method which 
alternates between flagging outlying cells and 
updating the covariance matrix as in the EM algorithm.  
The proposed methods are illustrated 
by simulations and on real data about volatile organic
compounds in children.
\end{abstract}

\vskip0.3cm
\noindent
{\it Keywords:} anomalous cells, cellHandler, 
   detection-imputation method, 
	 least angle regression algorithm,
	 marginal outliers.\\

\spacingset{1.45} 


\section{Introduction}
It is a fact of life that most real data sets 
contain outliers, that is, elements that do not
fit in with the majority of the data.
These outliers can be annoying errors, but may
also contain valuable information.
In either case, finding them is of 
practical importance.
In statistics this is called outlier detection, 
and in the computer science literature it is
also called anomaly detection or exception 
mining, see e.g. \cite{Chandola2009}.

The most common paradigm is that of casewise 
outliers, which assumes that most cases were
drawn from a certain model distribution but
some other cases were not.
The latter are also called rowwise outliers, since
data often comes in the form of a table (matrix)
in which the rows are the cases and the columns
represent the variables.
In computer science one often uses outlier 
detection methods based on Euclidean distances, 
which by construction are invariant for
orthogonal transformations of the rows.
In statistics many outlier detection methods
are also invariant for affine transformations, 
i.e. nonsingular linear transforms
combined with shifts. 

The study of cellwise outliers is a more
recent research topic. 
This is the situation where some individual
cells (entries) of the data matrix deviate from 
what they should have been.
\cite{alqallaf2009} first formulated this
paradigm.
Note that cells are intimately tied to the
coordinate system, whereas orthogonal or other
linear transformations would change the cells. 
To illustrate the difference between the
rowwise and cellwise approaches, consider
the standard multivariate Gaussian model
in dimension $d=4$ with
the suspicious point $(10,0,0,0)$.
By an orthogonal transformation of the data
this point can be moved to 
$(\sqrt{50},\sqrt{50},0,0)$ or to
$(5,5,5,5)$ so any orthogonally invariant 
rowwise detection method will treat all three 
situations the same way.
But in the cellwise paradigm $(10,0,0,0)$ 
has one outlying cell, 
$(\sqrt{50},\sqrt{50},0,0)$ has two and
$(5,5,5,5)$ has four.

For an illustration of cellwise outliers
see Figure \ref{fig:cellmap}. 
It depicts part of a dataset that will be
described later.
The rows are cases and the columns are
variables. 
The regular cells are shown in yellow.
Red colored cells indicate that their value 
is higher than expected, while blue cells 
indicate unusually low values.

\begin{figure}[!ht]
\center
\vskip0.2cm
\includegraphics[width = 0.55\textwidth]
  {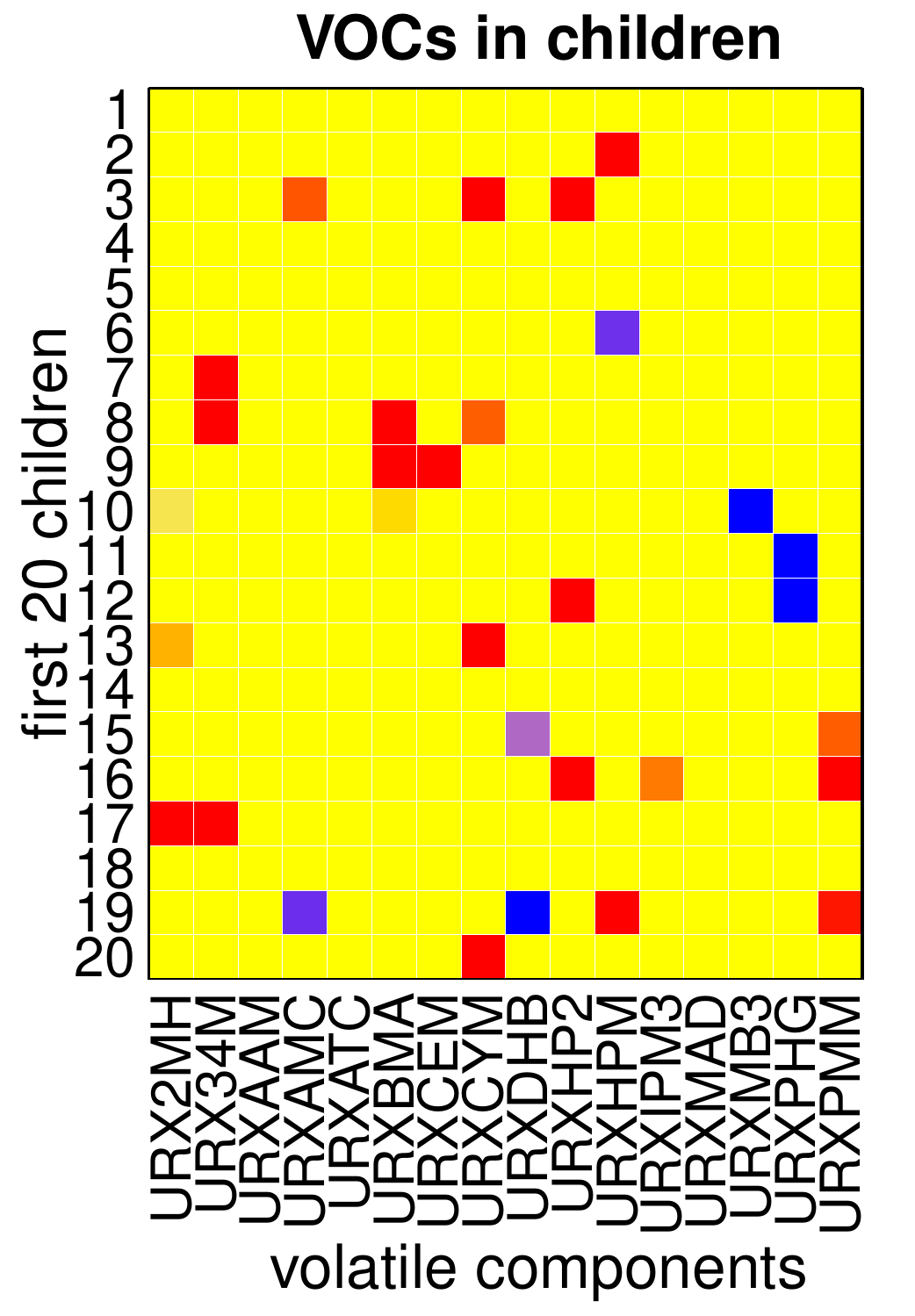}
\vskip-0.2cm
\caption{Illustration of cellwise outliers.
Red squares indicate cells with unexpectedly
high values, and blue squares indicate
unusually low values. Regular cells are
yellow.}
\label{fig:cellmap}
\end{figure}

When the model has substantially correlated
variables the cellwise outliers need not be
marginally outlying, and then it can be quite
hard to detect them.
\cite{VanAelst2011} proposed one of the first 
methods, based on an outlyingness measure of
the Stahel-Donoho type. 
\cite{Farcomeni2014} looks for the cells that,
when put to missing, yield the highest
Gaussian partial likelihood.
\cite{Agostinelli2015} and \cite{Leung2017} 
use a univariate or bivariate filter on the 
variables to flag cellwise outliers,
followed by S-estimation.
\cite{Rousseeuw2018} predict the values of all 
cells and flag the observed cells that differ 
much from their prediction.
\cite{Debruyne2019} consider rowwise outliers
and ask which variables contribute
the most to their outlyingness.
The O3 plot of \cite{Unwin:O3} visualizes
cases that are outlying in lower dimensions.

There has also been substantial work to estimate
the covariance matrix underlying the model in
the presence of cellwise outliers, which will
be be briefly reviewed in Section 
\ref{sec:existing}.

Most of the statistical research on cellwise
outliers has focused on the FICM contamination
model of \cite{alqallaf2009} which assumes that 
the outlying cells come from a single
distribution, and typically this distribution
has all its mass in a single value $\gamma$.
Here we will not restrict ourselves to that 
setting, and in the simulations we will allow
for the cellwise outlying values to depend on 
which cells are contaminated, creating 
structured cellwise outliers.
This is a more challenging problem, and it is 
clear that the underlying covariance structure 
will play a role.

Note that the multivariate setting is very 
different from regression with a univariate 
response.
In regression, having a robust fit is sufficient
for flagging outlying responses,\linebreak 
because their residuals from the robust fit will 
be large in absolute value. 
In the multivariate situation it is much harder:
even if we knew the true pre-contamination 
covariance matrix $\bSigma$, how would we find
the anomalous data cells? 
Currently no method is available to do this.
Our aim is to fill that gap by constructing 
such a method called cellHandler, 
described in Section \ref{sec:cellHandler}.
To estimate a cellwise robust covariance matrix
$\bhSigma$, Section \ref{sec:DI} constructs the 
detection-imputation algorithm which alternates 
between cellHandler and re-estimating $\bSigma$
as in the EM algorithm.
In Section \ref{sec:simulation} the performance 
of this approach is studied by simulation, and
Section \ref{sec:example} analyzes real data
on volatile organic compounds in children.

\section{The cellHandler method}
\label{sec:cellHandler}
In this section we construct a method to detect 
outlying cells when the true positive definite
covariance matrix $\bSigma$ is known.
In reality $\bSigma$ is usually unknown,
but this method is a major component of the
algorithm proposed in the next section for 
estimating $\bSigma$.
 
\subsection{Ranking cells by their outlyingness}
\label{sec:ranking}
We start by standardizing the columns (variables) 
of the dataset, using robust univariate estimates 
of location and scale such as the median and the 
median absolute deviation. 
This also ensures that the result will be 
equivariant to shifting and rescaling of the
original variables.
The resulting $d$-variate cases are denoted as 
$\bz_i$ for $i=1,\ldots,n$\,.

For a given case $\bz$, the central question in 
this section is how we can identify the cells 
that are most likely to be contaminated.
Any set of cells in $\bz$ may be contaminated, 
and while it may be tempting to somehow 
investigate all $2^d$ subsets of $\bz$ this 
quickly becomes infeasible due to the 
exponential complexity in $d$.
Therefore we need a different approach to provide
candidate cells that may be contaminated while 
avoiding an insurmountable computational cost. 
Note that the squared Mahalanobis distance 
$\MD^2(\bz,\bmu,\bSigma) = (\bz-\bmu)' \bSigma^{-1}
(\bz - \bmu)$ measures how far $\bz$ lies from 
the uncontaminated distribution.
The idea is to reduce the Mahalanobis distance of 
$\bz$ by changing only a few cells. 
Mathematically, we look for a  
$d$-variate vector $\bdelta$ such that 
$\MD^2(\bz - \bdelta, \bmu, \bSigma)$ is small. 
Interestingly, this problem can be rewritten in an 
elegant form, as presented in the following 
proposition.
\begin{proposition} \label{prop:1}
Modifying cells to reduce the Mahalanobis distance 
of their row can be rewritten using the sum of 
squares in a linear model.
\end{proposition}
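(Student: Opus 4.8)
The plan is to exhibit an explicit change of variables that turns the squared Mahalanobis distance into an ordinary residual sum of squares, so that the informal claim of Proposition~\ref{prop:1} becomes a concrete algebraic identity. First I would write out the objective from its definition,
\[
\MD^2(\bz - \bdelta, \bmu, \bSigma) = (\bz - \bmu - \bdelta)' \bSigma^{-1} (\bz - \bmu - \bdelta),
\]
so that the quantity we wish to make small is a quadratic form in the modification vector $\bdelta$, governed by the precision matrix $\bSigma^{-1}$.

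The key step is to factor that precision matrix. Since $\bSigma$ is positive definite, so is $\bSigma^{-1}$, and hence there exists a $d \times d$ matrix $\bW$ with $\bSigma^{-1} = \bW' \bW$; one may take $\bW$ to be the symmetric square root $\bSigma^{-1/2}$ or a Cholesky factor. Substituting this factorization and pulling $\bW$ through the difference gives
\[
(\bz - \bmu - \bdelta)' \bW' \bW (\bz - \bmu - \bdelta) = \bigl\| \bW(\bz - \bmu) - \bW \bdelta \bigr\|^2.
\]
Setting $\by := \bW(\bz - \bmu)$ and $\bX := \bW$, the objective collapses to $\| \by - \bX \bdelta \|^2$, which is precisely the residual sum of squares of a linear model with response $\by$, design matrix $\bX$, and coefficient vector $\bdelta$. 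This is the identity the proposition asserts.

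To close the interpretation I would then observe that modifying only a few cells of $\bz$ corresponds to a $\bdelta$ with only a few nonzero entries, i.e.\ to selecting only a few predictors (columns of $\bX$) into the model. This is exactly the regime where a lasso penalty on $\bdelta$ is the natural device, foreshadowing the penalized fit and the physical ``total distance moved'' interpretation used later.

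The main obstacle here is conceptual rather than computational: recognizing that the object to factor is the \emph{precision} matrix $\bSigma^{-1}$ and not $\bSigma$ itself, and noting that the induced design matrix $\bX = \bW$ is square ($d \times d$). This makes the associated linear model saturated, so it is the regularization, not a dimension deficit, that must enforce sparsity. Once the factorization is chosen, the remaining work is a one-line substitution, so I would keep that part brief.
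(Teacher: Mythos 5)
Your proposal is correct and follows essentially the same route as the paper: both rewrite the quadratic form $(\bz-\bmu-\bdelta)'\bSigma^{-1}(\bz-\bmu-\bdelta)$ as a residual sum of squares by factoring the precision matrix, with response $\bSigma^{-1/2}(\bz-\bmu)$ and design matrix $\bSigma^{-1/2}$ (the paper fixes the unique symmetric PD inverse root, whereas you also allow a Cholesky factor, which changes nothing essential). One small caution: avoid the symbol $\bW$ for this factor, since the paper reserves $\bW$ for the diagonal Huber-weight matrix introduced immediately afterward.
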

\begin{proof}
Observe that 
\begin{align}\label{eq:MDreg}
\MD^2(\bz- \bdelta, \bmu, \bSigma) 
&=(\bz-\bdelta-\bmu)' \bSigma^{-1} (\bz-\bdelta-\bmu)
   \nonumber \\
&= ||\bSigma^{-1/2}(\bz-\bdelta-\bmu)||_2^2 \nonumber \\
&= ||\bSigma^{-1/2}(\bz-\bmu)-\bSigma^{-1/2}\bdelta||_2^2
   \nonumber \\
&= ||\btY - \btX\bdelta||_2^2 
\end{align}
which is the objective of a regression without
intercept with known response vector\linebreak 
$\btY \coloneqq \bSigma^{-1/2}(\bz-\bmu)$ and 
predictor matrix $\btX \coloneqq \bSigma^{-1/2}$ 
with coefficient vector $\bdelta$. Here 
$\bSigma^{-1/2}$ is the unique PD inverse root
of $\bSigma$.
\end{proof}
It is clear that the ordinary least squares (OLS) 
solution to \eqref{eq:MDreg} is 
$\bhdelta_{LS} = \bz-\bmu$ since it makes the sum 
of squares zero. 
However, using $\bhdelta_{LS}$ would replace
the entire row by the vector $\bmu$, which would
lose the information in the non-outlying cells.
We prefer to change as few cells as possible,
so we want a sparse coefficient vector 
$\bhdelta$.
A natural choice for this problem is the lasso
\citep{tibs:lasso},
given by the minimization of 
\begin{equation} \label{eq:lasso}
 ||\btY - \btX\bdelta||_2^2
  + \lambda ||\bdelta||_1
\end{equation}
where $||\bdelta||_1 = 
|\delta_1| + \ldots + |\delta_d|$\;.
Lasso regression penalizes $||\bdelta||_1$ which 
yields a path of sparse solutions to the 
regression problem for decreasing value of $\lambda$. 
Note that the penalty term $||\bdelta||_1$ has a concrete 
physical meaning in this setting: it is the total 
distance which the corresponding cells of $\bz$ need 
to travel in order to bring $\bz$ into the fold. 
This is unusual, as the $L^1$ term is typically included 
as a device to induce sparsity without a specific 
subject-matter interpretation. 

The description is not yet complete, because we
have to take special care of cells $z_j$ that lie 
far away. Moving such cells into place requires
large components $\delta_j$ which inflate 
the penalty term, so these $z_j$ would appear rather 
late in the lasso path.
Fortunately such far marginal outliers $z_j$ are easy 
to spot, as they have a large univariate outlyingness
$O_j = |z_j - \mu_j|/\sqrt{\Sigma_{jj}}$\,.
Therefore we downweight the $\delta_j$ in the
penalty term by a factor $w_j = \min(1,1.5/O_j)$
which is the weight associated with the univariate 
Huber M-estimator.
This replaces $||\bdelta||_1$ in \eqref{eq:lasso}
by $||\bW \bdelta||_1$ where 
$\bW \coloneqq diag(w_1,\ldots,w_d)$\,.
Note that this weighted lasso can be rewritten as 
a plain lasso as follows.
Since all the weights are strictly positive $\bW$
is invertible, so we can write 
$\btX\bdelta = (\btX \bW^{-1})(\bW \bdelta) =
\bsX \bbeta$ where $\bsX \coloneqq \btX \bW^{-1}$ 
and $\bbeta \coloneqq \bW \bdelta$. 
This merely changes the units of the variables
in $\btX$, so we minimize
\begin{equation} \label{eq:lasso2}
||\btY - \bsX\bbeta||_2^2  + \lambda ||\bbeta||_1
\end{equation}
followed by transforming $\bhbeta$ back to $\bhdelta$.
The penalty term $||\bbeta||_1$ keeps its 
interpretation in the new units determined by 
the $w_j$.

Note that lasso steps do not only add variables:
sometimes they take a variable out of the model.
But in our context it is natural to impose that once 
a cell is flagged it stays flagged, i.e. that a 
selected regressor stays in the model. 
By imposing this constraint we arrive at the elegant 
and fast least angle regression (LAR) algorithm of 
\cite{efron2004}. 
This is the option \texttt{type="lar"} in the 
R-package \texttt{lars} \citep{HastieLars}, and its
performance turned out to be very similar to that
of \texttt{type="lasso"} in our setting. 
Using LAR also simplifies and speeds up the next 
step of cellHandler in Section 2.2.

The way LAR works in our problem is intuitive.
The gradient of $\MD^2(\bz- \bdelta, \bmu, \bSigma)=
||\btY - \bsX\bbeta||_2^2$ with respect to $\bbeta$ 
is $\bnabla = -2\bsX'(\btY-\bsX\bbeta)$.
This gradient $\bnabla=(\nabla_1,\ldots,\nabla_d)$
is zero at the minimum of $\MD^2$, i.e. when
$\bbeta$ is the OLS fit 
$(\bsX'\bsX)^{-1}\bsX\btY = \bW(\bz-\bmu)$.
LAR first takes the coordinate with
highest $|\nabla_j|$ and moves $\beta_j$, 
i.e. cell $j$, to
reduce $|\nabla_j|$ until it equals the second
largest $|\nabla_h|$. Then it 
moves cells $j$ and $h$ such that
$|\nabla_j|=|\nabla_h|$ decrease together, until
it reaches the third largest $|\nabla_m|$, and so on.

For each row $\bz$ we have now obtained a ranking
of its cells, corresponding to the order in which 
they occurred in the path for reducing
$\MD^2(\bz-\bdelta,\bmu,\bSigma)$.
Each row $\bz$ can have its own $\bdelta$.

\subsection{Handling outlying cells}
\label{sec:handling}

After $k$ steps of LAR we have a set of $k$
candidate cells.
The question is whether these candidate cells
are sufficient.
In other words, is it possible to edit these 
$k$ cells while keeping the remaining $d-k$ cells
intact, in such a way that the edited row 
behaves like a clean row?
To this end we will edit the $k$ candidate cells 
to maximize the Gaussian likelihood given the 
remaining cells.
Suppose without loss of generality that the 
candidate cells are the first $k$ entries of $\bz$  
so we can denote
$\bz' = [\bz_1' \; \bz_2']\,$ 
and $\bmu' = [\bmu_1' \; \bmu_2']\,$ 
where $\bz_1'$ and $\bmu_1'$
have length $k$. Also write
$\bSigma_{11}$ for the upper left submatrix of 
$\bSigma$ of size $k \times k$ and so on. 
As in the E step of the EM algorithm 
(see e.g. \cite{Little:EM}),
maximizing the Gaussian likelihood implies that 
$\bz_1$ should be shifted to 
$E_{\bmu, \bSigma}[\bZ_1| \bZ_2 = \bz_2] = 
\bmu_1 + \bSigma_{12}\bSigma_{22}^{-1} 
(\bz_2 - \bmu_2)$.

This imputation appears to require inverting 
the submatrix $\bSigma_{22}$ of $\bSigma$. 
However, it can also be obtained by OLS 
regression in the model \eqref{eq:MDreg} of 
Proposition 1 but restricted to the set of
$k$ candidate variables. 
This is shown in the following proposition, the 
proof of which is given in Section \ref{A:Estep} 
of the Appendix.

\begin{proposition} \label{prop:2}
Let the $k$-variate $\bhtheta_1$ 
be the OLS fit to the regression problem 
given by 
\begin{equation*}
\argmin_{\btheta} ||\bSigma^{-1/2}(\bz-\bmu)-
    (\bSigma^{-1/2})_{\bdot 1} \btheta_1||_2^2 
\end{equation*}
where $(\bSigma^{-1/2})_{\bdot 1}$ denotes the first 
$k$ columns of the matrix $\bSigma^{-1/2}$. Then
\begin{equation*}
  \bz_1 - \bhtheta_1 =  \bmu_1 + 
  \bSigma_{12}\bSigma_{22}^{-1} (\bz_2 - \bmu_2)\;\;.
\end{equation*}
\end{proposition}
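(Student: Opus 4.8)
The plan is to compute the OLS fit $\bhtheta_1$ in closed form and show that subtracting it from $\bz_1$ reproduces exactly the conditional-mean formula used in the E step. First I would set $\bu := \bz - \bmu$, partitioned as $\bu_1 = \bz_1 - \bmu_1$ and $\bu_2 = \bz_2 - \bmu_2$, and abbreviate the restricted design matrix by $\bA := (\bSigma^{-1/2})_{\bdot 1}$, so that the response in the regression of Proposition~\ref{prop:1} is $\btY = \bSigma^{-1/2}\bu$ and the OLS solution is the familiar $\bhtheta_1 = (\bA'\bA)^{-1}\bA'\btY$. The whole argument then reduces to rewriting $\bA'\bA$ and $\bA'\btY$ in terms of the precision matrix $\boldsymbol{P} := \bSigma^{-1}$, partitioned into blocks $\boldsymbol{P}_{11},\boldsymbol{P}_{12},\ldots$ conformably with the split of $\bz$ into its first $k$ and last $d-k$ coordinates.

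The key step exploits that $\bSigma^{-1/2}$ is symmetric, so $\bA'$ is precisely the first $k$ rows of $\bSigma^{-1/2}$. Consequently $\bA'\bA$ is the $(1,1)$ block of $(\bSigma^{-1/2})^2 = \boldsymbol{P}$, i.e. $\bA'\bA = \boldsymbol{P}_{11}$, and likewise $\bA'\bSigma^{-1/2}$ is the first $k$ rows of $\boldsymbol{P}$, namely $[\,\boldsymbol{P}_{11}\;\;\boldsymbol{P}_{12}\,]$. Substituting these into the OLS formula gives
\[
\bhtheta_1 = \boldsymbol{P}_{11}^{-1}\left(\boldsymbol{P}_{11}\bu_1 + \boldsymbol{P}_{12}\bu_2\right) = \bu_1 + \boldsymbol{P}_{11}^{-1}\boldsymbol{P}_{12}\bu_2 .
\]
Subtracting from $\bz_1 = \bmu_1 + \bu_1$ then cancels the $\bu_1$ term and leaves $\bz_1 - \bhtheta_1 = \bmu_1 - \boldsymbol{P}_{11}^{-1}\boldsymbol{P}_{12}\bu_2$.

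It remains only to identify $-\boldsymbol{P}_{11}^{-1}\boldsymbol{P}_{12}$ with $\bSigma_{12}\bSigma_{22}^{-1}$, which is the one genuinely nonroutine ingredient and hence what I expect to be the main obstacle, even though it is a classical block-matrix fact rather than a real difficulty. I would obtain it from the standard block-inverse relations for a symmetric PD matrix, which give $\boldsymbol{P}_{12} = -\boldsymbol{P}_{11}\bSigma_{12}\bSigma_{22}^{-1}$ and therefore $-\boldsymbol{P}_{11}^{-1}\boldsymbol{P}_{12} = \bSigma_{12}\bSigma_{22}^{-1}$. Plugging this into the previous display yields
\[
\bz_1 - \bhtheta_1 = \bmu_1 + \bSigma_{12}\bSigma_{22}^{-1}(\bz_2 - \bmu_2),
\]
which is exactly the asserted E-step imputation. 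Throughout, the positive definiteness of $\bSigma$ guarantees that $\boldsymbol{P}_{11}$, $\bSigma_{22}$ and $\bA'\bA$ are all invertible, so every inverse written above is well defined.
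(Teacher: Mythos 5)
Your proposal is correct and follows essentially the same route as the paper's own proof: exploit the symmetry of $\bSigma^{-1/2}$ so that $\bA'\bA$ becomes the $(1,1)$ block of $\bSigma^{-1}$ and $\bA'\bSigma^{-1/2}$ becomes its first $k$ rows, then invoke the block identity $-(\bSigma^{-1})_{11}^{-1}(\bSigma^{-1})_{12}=\bSigma_{12}\bSigma_{22}^{-1}$ coming from $\bSigma^{-1}\bSigma=\bI$. The only difference is notational (your $\boldsymbol{P}$ versus the paper's starred blocks $\bSigma_{11}^*$, $\bSigma_{12}^*$), so there is nothing further to add.
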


\noindent In the implementation of cellHandler 
these vectors
$\bhtheta_1$ are obtained as a byproduct of the LAR
algorithm without extra computational cost, see
Section \ref{A:cellHandler} of the Appendix.
This means that it carries out the above computation
for $k=1,\ldots,d$ without having to invert any
matrix.

We now have a sequence of length $d$ of cells
in $\bz$, with their possible imputations at 
every stage $k$.
The question remains where to stop in this path, 
i.e. how many cells should we actually flag? 
For that we use the following proposition:

\begin{proposition} \label{prop:3}
For every $1 \ls k \ls d$ we have:
\begin{enumerate}
\item The residual sum of squares 
  $\mbox{RSS}_k  = ||\bSigma^{-1/2}(\bz - \bmu)- 
	  (\bSigma^{-1/2})_{\bdot 1} \bhtheta_1||_2^2$
	of the OLS fit $\bhtheta_1$ to the first $k$ 
	cells in the path
	equals the squared partial Mahalanobis distance 
	$\MD^2(\bz_2,\bmu_2,\bSigma_{22}) =
 (\bz_2-\bmu_2)'\bSigma_{22}^{-1}(\bz_2-\bmu_2)\,$.
\item For Gaussian data the difference between 
  two subsequent $\mbox{RSS}$ follows the $\chi^2$ 
	distribution with 1 degree of freedom, i.e. 
	$\Delta_k \coloneqq \mbox{RSS}_{k-1} -
	 \mbox{RSS}_k \sim \chi^2(1)$.
\end{enumerate}
\end{proposition}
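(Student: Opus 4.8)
The plan for statement~1 is to substitute the closed form of the restricted fit from Proposition~\ref{prop:2} into the residual and simplify. Split $\bSigma^{-1/2}=[(\bSigma^{-1/2})_{\bdot 1}\,|\,(\bSigma^{-1/2})_{\bdot 2}]$ into its first $k$ and last $d-k$ columns and write $\btY=(\bSigma^{-1/2})_{\bdot 1}(\bz_1-\bmu_1)+(\bSigma^{-1/2})_{\bdot 2}(\bz_2-\bmu_2)$. Proposition~\ref{prop:2} gives $\bhtheta_1=\bz_1-\bmu_1-\bSigma_{12}\bSigma_{22}^{-1}(\bz_2-\bmu_2)$; inserting this into $\btY-(\bSigma^{-1/2})_{\bdot 1}\bhtheta_1$ cancels every $\bz_1-\bmu_1$ term and leaves the residual as the single linear image $M(\bz_2-\bmu_2)$ with $M\coloneqq(\bSigma^{-1/2})_{\bdot 1}\bSigma_{12}\bSigma_{22}^{-1}+(\bSigma^{-1/2})_{\bdot 2}$. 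Hence $\mbox{RSS}_k=(\bz_2-\bmu_2)'M'M(\bz_2-\bmu_2)$ and the whole statement reduces to the identity $M'M=\bSigma_{22}^{-1}$. I would prove this by reading the four blocks of $\bSigma^{-1}=(\bSigma^{-1/2})'\bSigma^{-1/2}$ as the Gram products of the two column groups, expanding $M'M$, and replacing each block by its Schur-complement form in $\bSigma_{22}^{-1}$ and $\bSigma_{11\cdot 2}\coloneqq\bSigma_{11}-\bSigma_{12}\bSigma_{22}^{-1}\bSigma_{21}$; the three $\bSigma_{11\cdot 2}^{-1}$ contributions cancel and only $\bSigma_{22}^{-1}$ survives. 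This block computation is routine but is where all the structure sits, so I regard it as the main obstacle.

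For statement~2 the decisive observation is that under $\bz\sim N(\bmu,\bSigma)$ the regression response is standardized, $\btY=\bSigma^{-1/2}(\bz-\bmu)\sim N(\bzero,\bI)$. By statement~1, $\mbox{RSS}_k=\MD^2(\bz_2,\bmu_2,\bSigma_{22})$ is the squared Mahalanobis distance of the $d-k$ unfitted cells $\bz_2$; passing from step $k-1$ to step $k$ moves one further cell $z_j$ into the fitted set, so the step-$(k-1)$ complement is $\{z_j\}\cup\bz_2$. I would then apply the chain-rule factorization of a Gaussian quadratic form, which adds to $\MD^2(\bz_2,\bmu_2,\bSigma_{22})$ exactly the squared standardized conditional residual of $z_j$ given $\bz_2$; thus $\Delta_k=(z_j-E[z_j\mid\bz_2])^2/\mathrm{Var}(z_j\mid\bz_2)$. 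Under Gaussianity this ratio is the square of a standard normal variable, whence $\Delta_k\sim\chi^2(1)$.

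Equivalently, since $\bhtheta_1$ is the OLS fit, $\mbox{RSS}_k=\btY'(\bI-P_k)\btY$ where $P_k$ is the orthogonal projection onto the first $k$ columns of $\bSigma^{-1/2}$; because those columns are nested and (as $\bSigma^{-1/2}$ is nonsingular) linearly independent, $P_k-P_{k-1}$ is a rank-one orthogonal projection, so $\Delta_k=\btY'(P_k-P_{k-1})\btY$ with $\btY\sim N(\bzero,\bI)$ is again $\chi^2(1)$. The one point I would state explicitly is that this reasoning holds for a \emph{prescribed} nested sequence of cell subsets: along the LAR path the index $j$ entering at step $k$ is itself a function of $\btY$, so $P_k$ is random and the exact $\chi^2(1)$ law is the conditional law given the selected ordering rather than the unconditional law along the data-driven path.
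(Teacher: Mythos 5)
Your proof is correct, and it diverges from the paper's in an instructive way on part 1 while essentially coinciding on part 2. For part 1 the paper pads $\bhtheta_1$ with zeros to a $d$-vector $\bhtheta$, rewrites $\mbox{RSS}_k=(\bz-\bmu-\bhtheta)'\bSigma^{-1}(\bz-\bmu-\bhtheta)$, and invokes the factorization $\bSigma^{-1}=\bA\bB\bA'$ with $\bA$ unit lower block-triangular and $\bB$ block-diagonal with blocks $\bC_1^{-1}$ and $\bSigma_{22}^{-1}$; Proposition 2 then annihilates the first block in one stroke. You instead substitute the closed form of $\bhtheta_1$ into the residual, reduce everything to the single identity $M'M=\bSigma_{22}^{-1}$, and verify it by reading the blocks of $\bSigma^{-1}$ as Gram products and expanding via Schur complements. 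Both work; the paper's factorization avoids the four-term expansion (note there are four, not three, $\bSigma_{11\cdot 2}^{-1}$ contributions: $+1$ from the quadratic term, $-1$ from each of the two cross terms, and $+1$ from the $\bSigma_{22}^{*}$ block, summing to zero), while your route has the merit of exhibiting the residual explicitly as the linear image $M(\bz_2-\bmu_2)$. For part 2 your conditional-residual argument is exactly the paper's, which computes $\Delta_1=((z_1-\mu_1^*)/\sigma_1^*)^2$ with $\mu_1^*$ and $\sigma_1^*$ the conditional mean and standard deviation of $z_1$ given $\bz_2$ and declares the subsequent steps analogous; your alternative via $\Delta_k=\btY'(P_k-P_{k-1})\btY$, with $P_k-P_{k-1}$ a rank-one orthogonal projection applied to the standardized $\btY\sim N(\bzero,\bI)$, is a cleaner packaging that treats all $k$ uniformly. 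Your closing caveat that the exact $\chi^2(1)$ law holds for a prescribed nested sequence of cells rather than for the data-driven LAR ordering is well taken: the paper concedes the same point only obliquely, by calling the distributional assumption ``unrealistic in our setting'' and by citing Danilov (2010), where the cells were not yet ranked from most to least outlying.
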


The proof is in Section \ref{A:chisq} of 
the Appendix.
The distributional assumption in part 2 is
unrealistic in our setting, but at least the
result provides a rough yardstick that we can
use in our data-analytic procedure.
Following the path for $1 \ls k \ls d$ we will
compare the $\Delta_k$\, to a cutoff
$q$, say the 0.99 quantile of $\chi^2(1)$,
and flag the cells with  $\Delta_k > q$.

We illustrate cellHandler by two simple bivariate 
examples. 
The left part of Figure \ref{fig:DOA} assumes that 
the true $\bmu = \bzero$ and that $\bSigma$ is the 
identity matrix, so the correlation $\rho$ is zero. 
For any point $\bz = [z_1\;\;z_2]'\,$ we can then 
run cellHandler to see which of these cells are 
flagged, if any. 
In the central square no cells are flagged, to its 
left and right $z_1$ is flagged, above and below it 
$z_2$ is flagged, and in the outer regions both 
$z_1$ and $z_2$ are flagged. 
Things get more eventful when $\bSigma$ has 1 on 
the diagonal and $\rho = 0.9$ elsewhere. 
In the right panel of Figure \ref{fig:DOA} we see 
that no cells are flagged when $\bz$ lies in 
part of an elliptical region. 
The domain where only $z_1$ is flagged now has a 
more complicated form, and the same holds for 
$z_2$\,, whereas the region in which both are 
flagged is similar to before. 
Of course the main purpose of cellHandler is to 
deal with higher dimensions, which are harder to 
visualize. 

\begin{figure}[!ht]
\center
\vskip0.2cm
\includegraphics[width = 0.49\textwidth]
  {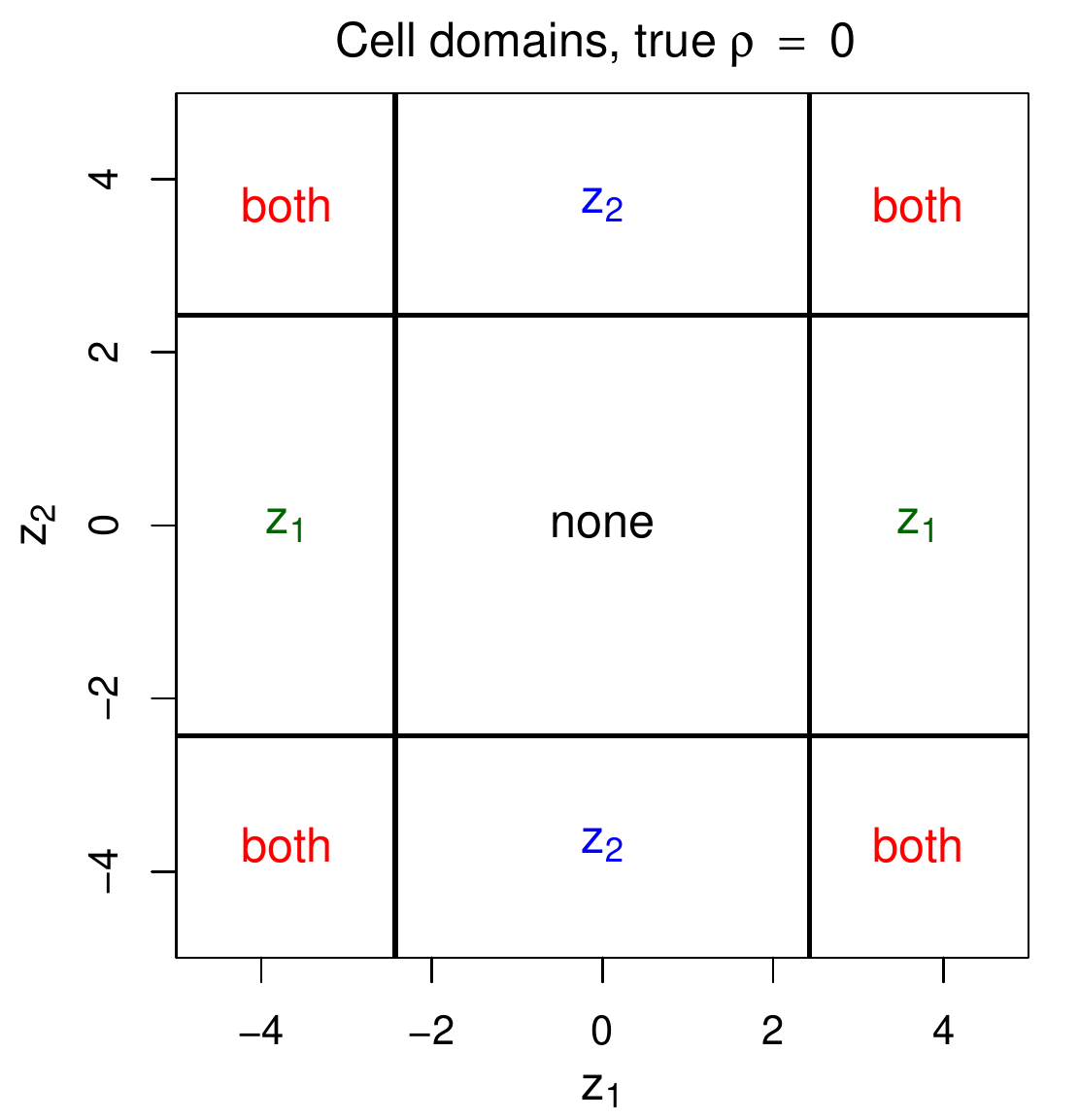}
\includegraphics[width = 0.49\textwidth]
  {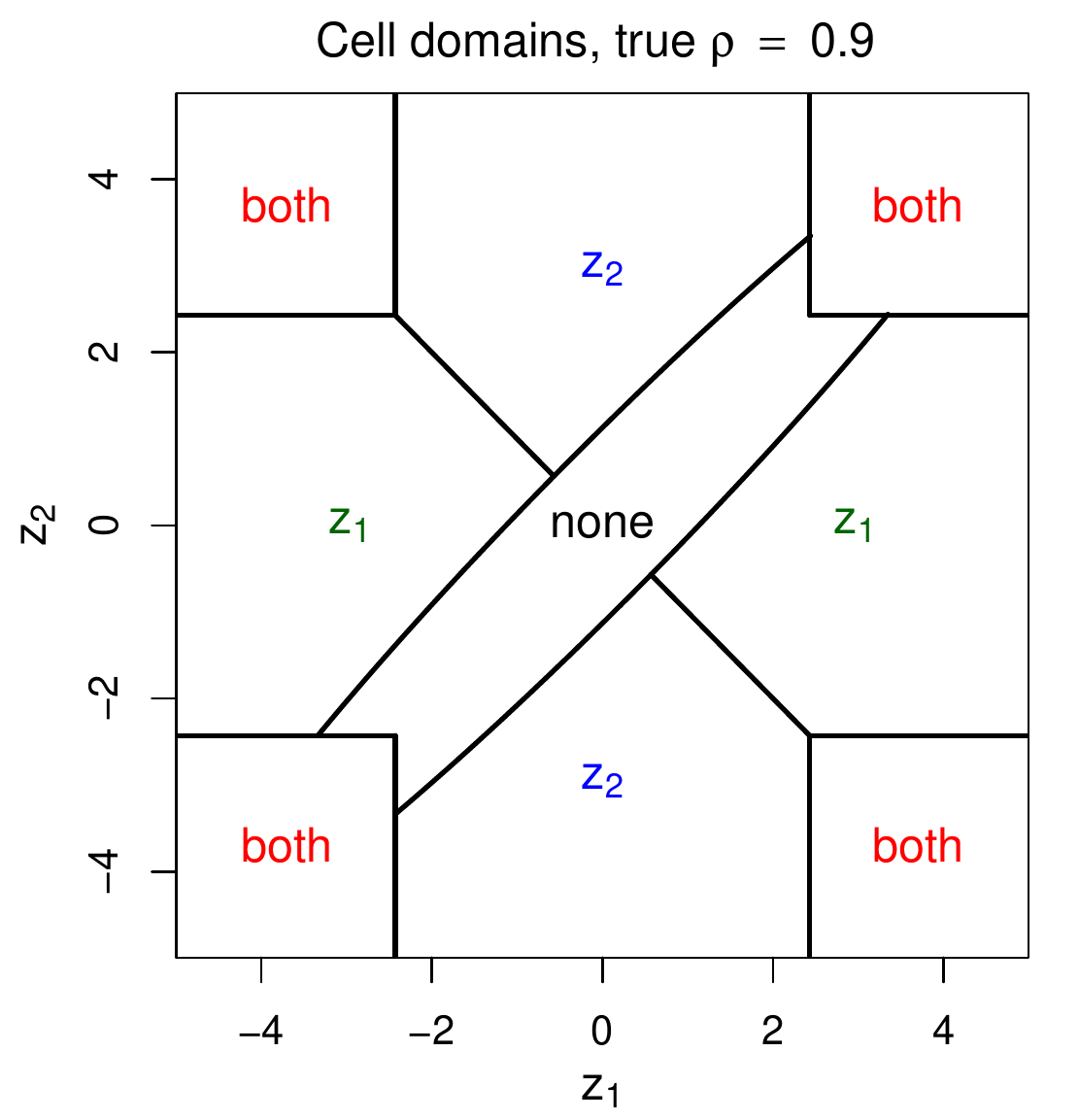}
\vskip-0.2cm
\caption{Bivariate domains where no cells are 
  flagged, where only $z_1$ is flagged, 
	where only $z_2$ is flagged, and where both are 
	flagged, when the true correlation is $\rho=0$ 
	(left panel) and when $\rho=0.9$ (right panel).}
\label{fig:DOA}
\end{figure}

\subsection{Simulation}
\label{subsec:cFsim}
To evaluate the performance of cellHandler 
we run a small simulation in which the 
uncontaminated data are $d$-variate Gaussian 
with $\bmu = \bzero$.
Since cellwise methods are neither affine 
or orthogonal invariant, we consider 
underlying covariance matrices $\bSigma$ 
of two types.  
Type ALYZ are the randomly 
generated correlation matrices of 
\cite{Agostinelli2015} which typically have 
relatively small correlations.
Type A09 is given by 
$\Sigma_{jh} := (-0.9)^{|j-h|}$ and contains 
both large and small correlations.

The outlying cells are generated as follows. 
The positions of the cells to be contaminated
are obtained by randomly drawing $n \eps$
indices in each column of the data matrix.
Then we look at each row $(z_1,\ldots,z_d)$ 
with such cells, and denote the indices of those 
cells as the set
$K = \{j(1),\ldots,j(k)\}$ of size $k$.
Next, we replace $(z_{j(1)},\ldots,z_{j(k)})$
by the $k$-dimensional row 
$\bv = \gamma \sqrt{k}\, \bu'/\MD(\bu,\bmu_K, 
 \bSigma_K)$
where $\bmu_K$ and $\bSigma_K$ are restricted
to the indices in $K$ and where $\bu$ is the
eigenvector of $\bSigma_K$ with smallest 
eigenvalue.
This procedures generates $\bv$ which are
structurally outlying in the subspace of the 
coordinates in $K$, while many of 
these cells will not be marginally outlying.
This produces cellwise outliers that are more 
challenging than in the earlier literature, 
which used $\bv = (\gamma,\ldots,\gamma)$.

\begin{figure}[!hb]
\center
\vskip-0.5cm
\includegraphics[width = 0.43\textwidth]
   {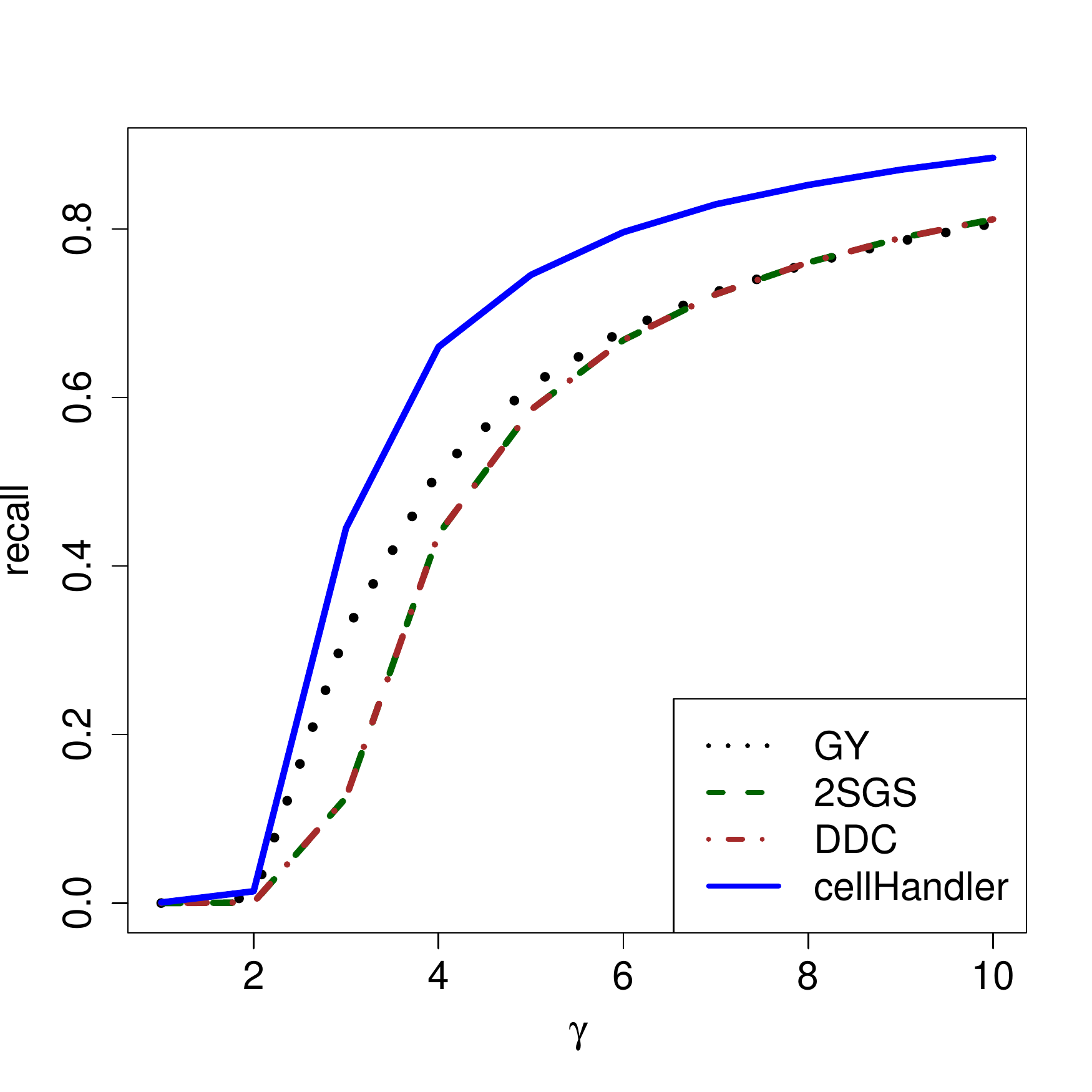}
\includegraphics[width = 0.43\textwidth]
   {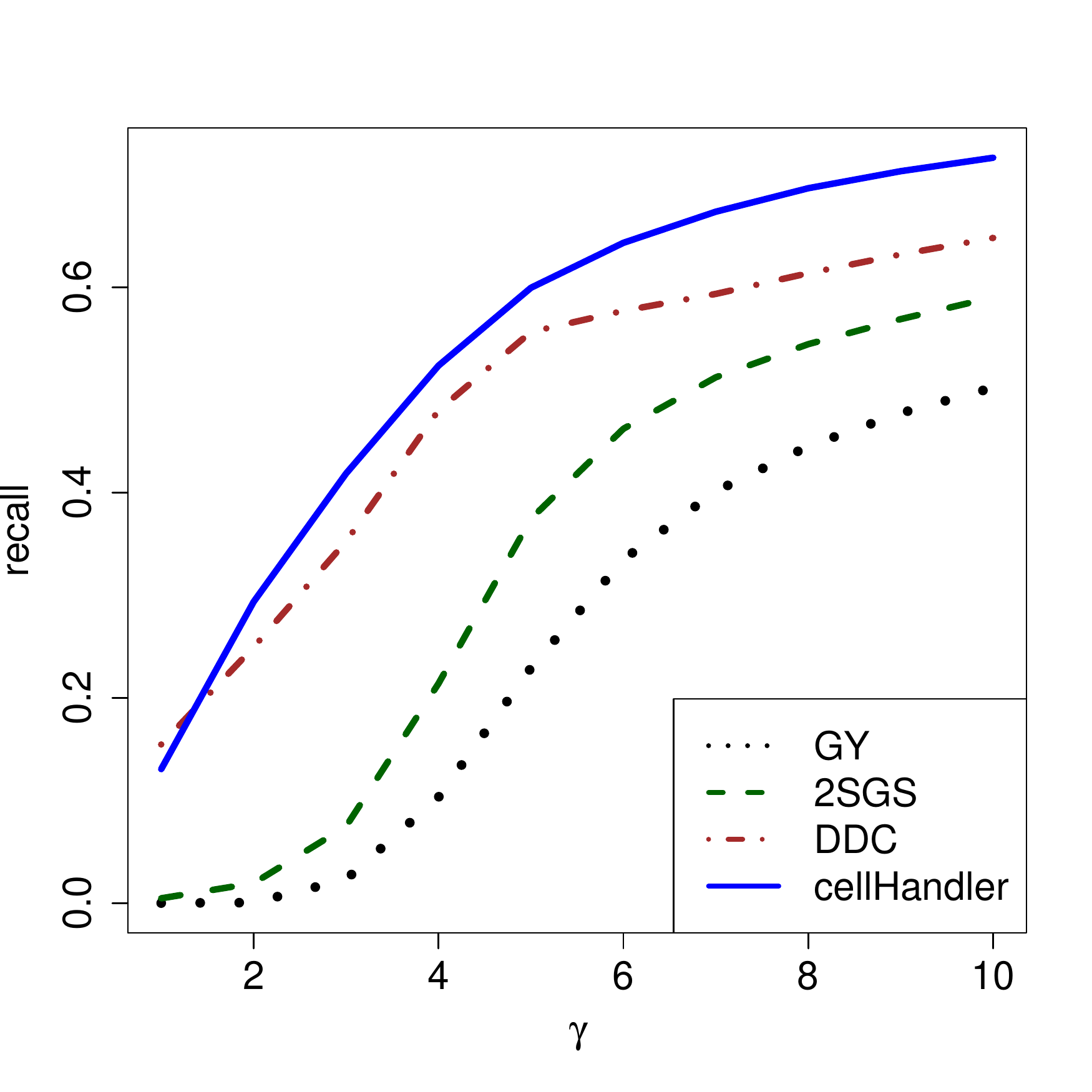}
\includegraphics[width = 0.43\textwidth]
   {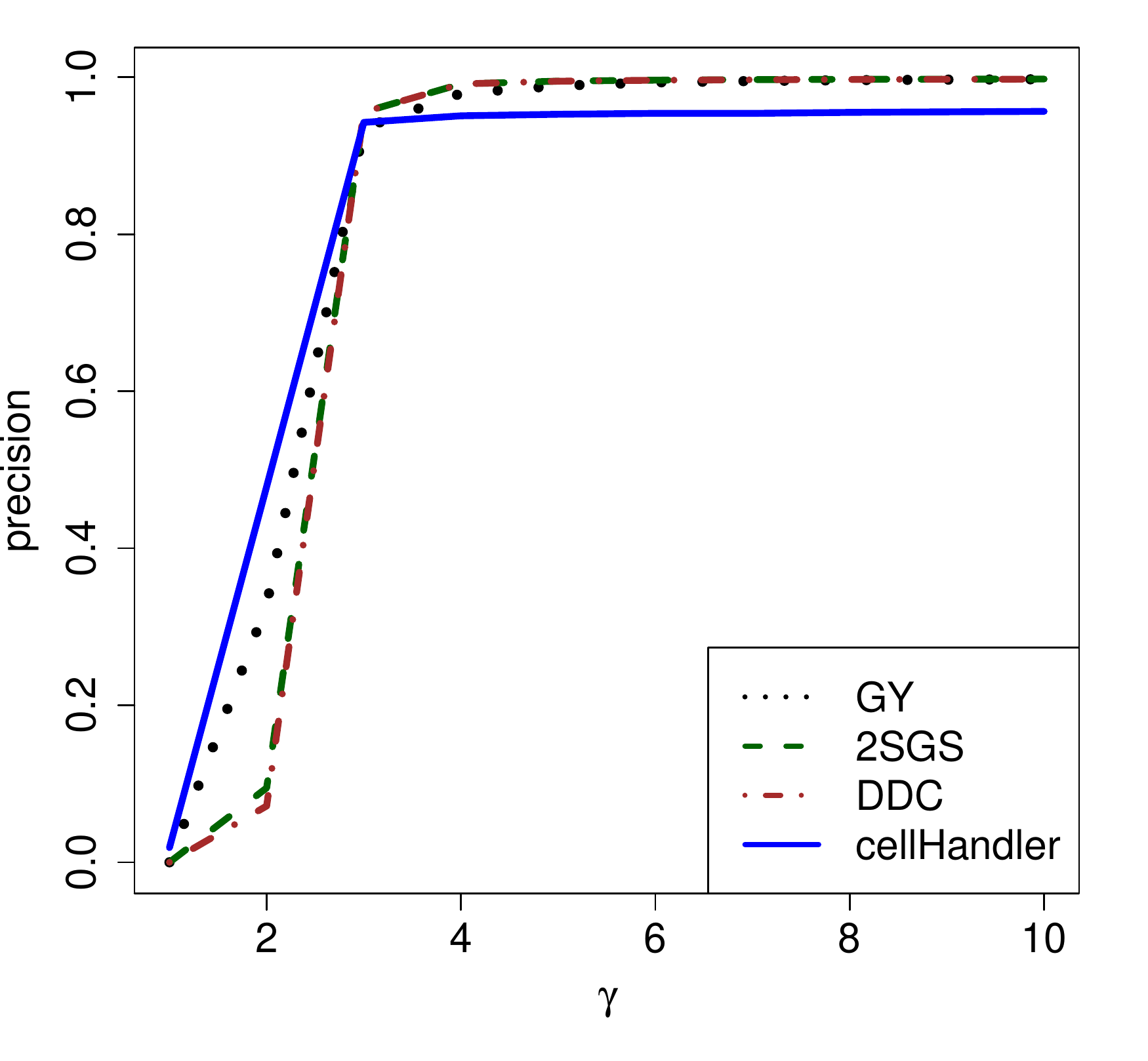}
\includegraphics[width = 0.43\textwidth]
   {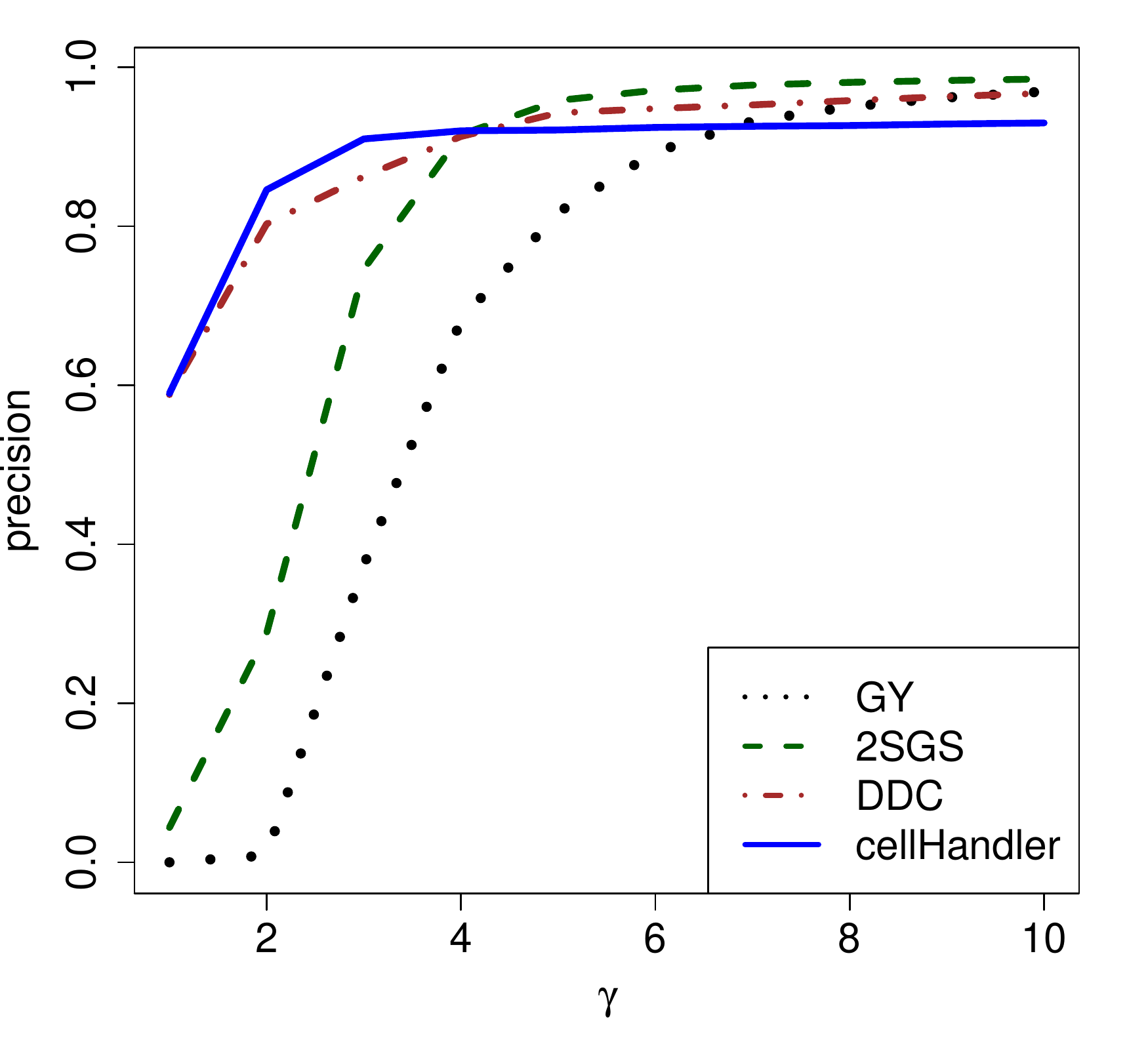}
\vskip-0.5cm
\includegraphics[width = 0.43\textwidth]
   {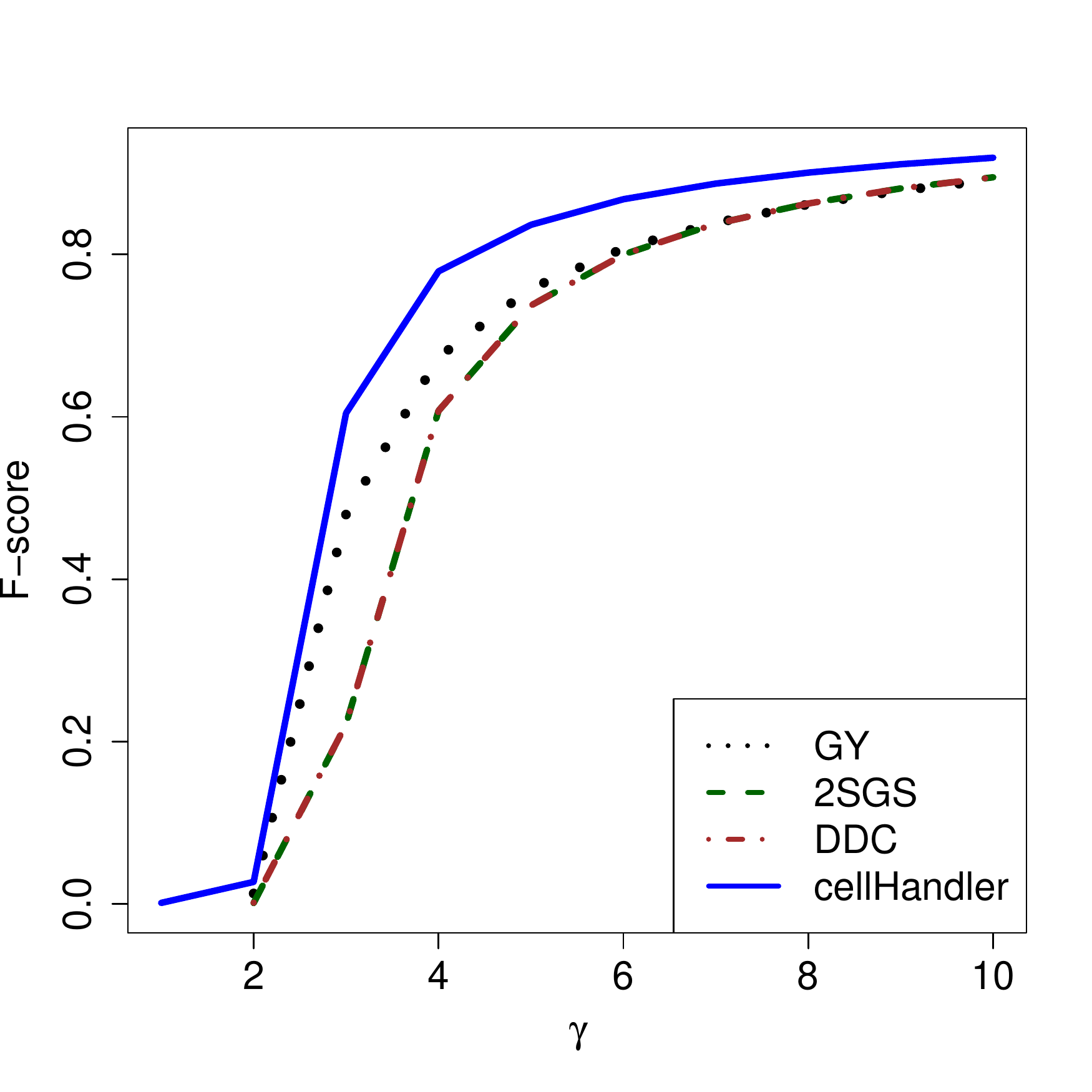}
\includegraphics[width = 0.43\textwidth]
   {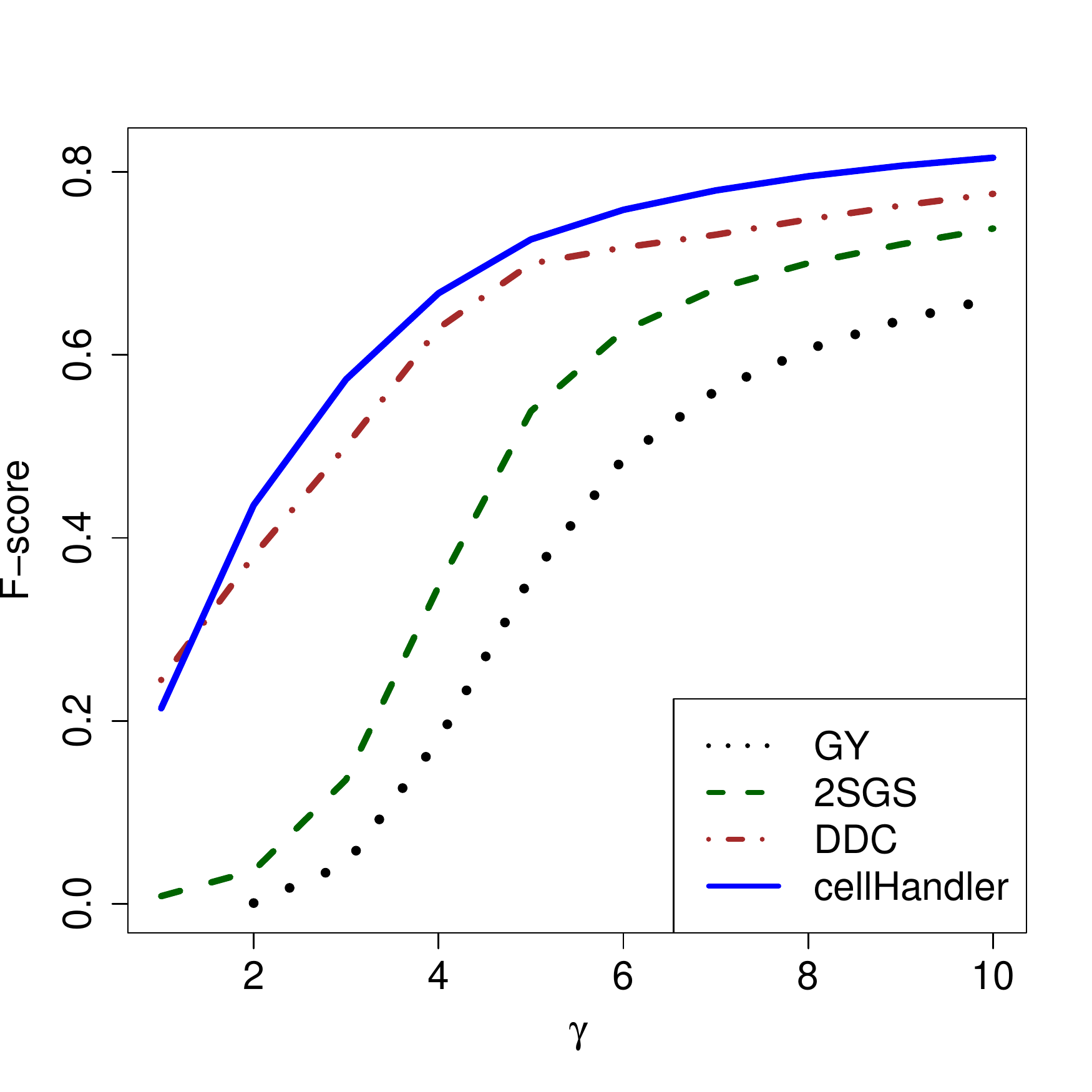}
	\vskip-0.2cm
\caption{Comparison of methods for detecting 
   cellwise outliers on data generated by 
	 the contaminated ALYZ model (left) and 
	 the contaminated  A09 model (right), 
	 for $n=400$ points in $d=20$ dimensions.
	 The plots show recall (top),
   precision (middle), and F-score (bottom).}
\label{fig:cellHandler_perout0.2}
\end{figure}

Figure 
\ref{fig:cellHandler_perout0.2} shows the 
performance of cellHandler on samples of 
size $n = 400$ in $d=20$ dimensions
with $\eps=20\%$ of cellwise outliers, 
using the covariance matrix estimated by 
the algorithm DDCW.DI 
described in Section \ref{sec:DIalgo}.
The other curves are from three existing 
techniques for flagging cells.
The first one is the univariate Gervini-Yohai 
filter (GY) specified in \citep{Agostinelli2015}.
The second is the multivariate 
DetectDeviatingCells (DDC) algorithm 
of \cite{Rousseeuw2018}, available in the 
\texttt{cellWise} package \citep{cellWise}
as the function \texttt{DDC}.
The third is the default filter of the 2SGS 
method in \citep{Leung2017}, which is a 
combination of a bivariate GY filter with DDC.
The top panels in Figure
\ref{fig:cellHandler_perout0.2} show the 
{\it recall}, which is the fraction of generated
cellwise outliers that are flagged as such.
The data in the left plot were generated by the
contaminated ALYZ model, and on the right by the
contaminated A09 model.
We see that cellHandler has the highest recall
at each $\gamma$. 
When $\gamma$ increases the cellwise outliers 
become marginally outlying, making them easier 
to flag.
The middle row of the figure shows the 
{\it precision}, which is the fraction of cells
flagged as outlying that were generated as such. 
We see that cellHandler does not have the 
best precision among competing methods at high 
$\gamma$, which is due to the tradeoff between
precision and recall.
Finally, the bottom row shows the F-score, also
called the Dice coefficient \citep{Dice1945},
which summarizes the performance of a binary 
classification through the harmonic mean of 
precision and recall. Based on this
summary measure, cellHandler performs best.

\section{Cellwise robust estimation of a covariance 
   matrix}
\label{sec:DI}
\subsection{Existing approaches}
\label{sec:existing}
The previous section described a method for flagging
cellwise outliers when the true center $\bmu$ and 
covariance matrix $\bSigma$ are known. 
Of course these are rarely known in practice, so 
they have to be estimated.
The center $\bmu$ can be estimated quite easily by 
applying a robust estimator (like the median) to 
each coordinate.
Estimating the covariance matrix $\bSigma$ is the
hard part. 
There exist several approaches to this problem. 

A popular technique is to compute robust covariances
between each pair of variables, and to assemble 
them in a matrix.
To estimate these pairwise covariances, 
\cite{Ollerer2015} and \cite{Croux2016} use
rank-based methods such as the Spearman and
normal scores correlations. 
\cite{tarr2016} instead propose to use the robust 
pairwise correlation estimator of 
\cite{Gnanadesikan1972} in combination with the 
robust scale estimator $Q_n$ of \cite{Rousseeuw1993}. 
As the resulting matrix is not necessarily positive
semidefinite (PSD), they then compute the nearest
PSD matrix by the algorithm of \cite{Higham2002}. 
All of these pairwise covariance estimators are fast 
to compute.
We will compare the performance of these methods
in Section \ref{sec:simulation}.

A second approach is the snipEM procedure 
proposed by \cite{Farcomeni2014} and implemented in 
the R package \texttt{snipEM} of 
\cite{Farcomeni2019}.
Its first step flags cellwise outliers in each 
variable separately using a boxplot rule, and then 
"snips" them, which means making them missing.  
The second step tries many interchanges that unsnip 
a randomly chosen snipped cell and at the same time 
snip a randomly chosen unsnipped cell, and only 
keeps an interchange when it increases the partial 
Gaussian likelihood.
This procedure is slower than the pairwise 
covariance approach.

The current state of the art to deal with complex 
cellwise outliers is the two-step generalized 
S-estimator (2SGS) of \cite{Agostinelli2015} and 
\cite{Leung2017} implemented in the R package 
\texttt{GSE} \citep{Leung2019}. 
In a first step, the method uses a filter
(called 2SGS in Figure 
\ref{fig:cellHandler_perout0.2} above) to detect  
cellwise outliers.
These cells are then set to missing, and the
generalized S-estimator of \cite{danilov2012} 
is run.
A short survey of cellwise robust covariance
estimators can be found in Sections 6.13 and
6.14 of \cite{Maronna:RobStat}.

\subsection{The detection-imputation algorithm}
\label{sec:DIalgo}
Our algorithm for constructing a cellwise robust
covariance matrix starts by standardizing the 
columns of the dataset as in the beginning of 
Section \ref{sec:ranking}.
Next, we compute initial estimators $\bhmu^0$ 
and $\bhSigma^0$.
For this we can use the 2SGS estimator of
\cite{Leung2017} described above.
We will also try a different initial estimator
called DDCW, which is a combination of the DDC
method \citep{Rousseeuw2018} and the wrapped 
covariance matrix of \cite{Raymaekers2018}.
This initial estimator is described in Section
\ref{A:DDCW} of the Appendix.
 
The detection-imputation (DI) algorithm then
alternates the D-step and the I-step, both
described below.\\

\vskip-0.4cm
\noindent \textbf{D-step: detecting outlying 
   cells across all rows.}\\
The D-step first applies the cellHandler method
of Section \ref{sec:cellHandler} to each row
$\bz_i'$ based on the estimates $\bhmu^{t-1}$ and 
$\bhSigma^{t-1}$ from the previous iteration step.
This way each row $\bz_i'$ gets a ranking of its 
cells $z_{ij}$\;. 
From the $\Delta_k$ in its path we construct a 
nonincreasing sequence of criterion values 
$C_{ij} \coloneqq 
 \max_{h \gs k(j)} \Delta_h$\;. 
If any cells $z_{ih}$ are missing (NA) 
these are put in front of the path
with $C_{ih} \coloneqq +\infty$.

Should some columns have too many flagged cells
(including NA's) it could become 
difficult to estimate a correlation between them, 
especially if the flagged sets overlap little.
Even worse, flagging all cells in a column
would remove all information about that variable.
Therefore, we impose a maximal number of flagged
cells in each column, including the NA's.
This number is $n\, \mbox{\texttt{maxCol}}$ where the 
input parameter \texttt{maxCol} is set to $25\%$ by 
default.
Note that this is a constraint on the columns, 
whereas we are flagging cells by row.
We resolve this with the following algorithm:

\noindent \hskip0.4cm - sort the criterion values 
  $C_{ij}$ of all cells in the matrix in decreasing 
	order;

\noindent \hskip0.4cm - walk down this list. 
  If a $C_{ij}$ lies 
  below the cutoff value $q$ we ``lock'' row $i$,
	i.e. no cells of row $i$ can be flagged any more.
	If $C_{ij} > q$ the cell is flagged, unless it
	belongs to a column which already has 
	$\,n\, \mbox{\texttt{maxCol}}\,$ flagged cells. 
	In the latter case, row $i$ is locked also.

\noindent This procedure yields a (possibly empty)
list of flagged cells in each row, 
which overall contains the most outlying cells 
subject to the \texttt{maxCol} constraint.\\

\vskip-0.4cm
\noindent \textbf{I-step: Re-estimate $\mu$ and 
  $\bSigma$}\,.\\
The I-step is basically one step of the EM
algorithm which considers the flagged cells as
missing.
However, it is computationally more efficient 
since it reuses results that are already 
available.
In each row, the set of flagged cells is one of
the active sets considered by LAR in 
cellHandler, so its coefficient $\bhtheta_1$ 
from Proposition \ref{prop:2} is known. 
This makes it trivial to impute the flagged cells,
so the E-step of EM requires no additional 
computation.
Next, $\bhmu^{t}$ and $\bhSigma^{t}$ are
computed as in the M-step, as described in more
detail in Section \ref{A:DI} of the Appendix.
This iterative procedure stops 
when both $\bhmu^{t} - \bhmu^{t-1}$ and
$\bhSigma^{t} - \bhSigma^{t-1}$ are small. 
At the end of the DI algorithm we unstandardize 
$\bhmu$ and $\bhSigma$ using the univariate 
location and scale estimates of the original 
data columns.

The time complexity of the DI algorithm is
$O(T n d^3)$ where $T$ is the number of iteration 
steps.
This is the same complexity as that of the classical 
EM algorithm for covariance estimation with missing 
data. 

Note that for the DI method to work, the initial
covariance matrix (whether 2SGS or DDCW) and those 
in all iteration steps need to be invertible. 
This requires that $n > d$, so for now the approach
does not allow for $d \geqslant n$. Possible
extensions are a topic for further research, and
would likely require penalization or other forms of
regularization.

\section{Measuring scatter matrix discrepancy}
\label{sec:discrepancy}
In the simulation in the next section we want to
measure how much an estimated scatter matrix
deviates from the true underlying positive definite 
(PD) scatter matrix.
For this we need a discrepancy measure for scatter
matrices.
Here we will construct a pre-existing discrepancy
measure from first principles, in order to dispel
a common misconception that this measure would
only make sense when the underlying data follow a
multivariate normal (Gaussian) distribution.

Suppose we want to measure how much a scatter matrix
$\bA$ deviates from a reference scatter matrix $\bB$,
where the $d \times d$ matrix $\bB$ is PD but 
$\bA$ only needs to be positive semidefinite (PSD). 
A simple measure of this type is
\begin{equation} \label{eq:Frobenius}
  ||\bA - \bB||_2 =  \Big( \sum_{i=1}^{d} 
  \sum_{j=1}^{d}\, (a_{ij}-b_{ij})^2\, 
	\Big)^{1/2}
\end{equation}
but it is insufficiently suited to our scatter 
matrix context, as it does not tell us whether 
$\bA$ is singular.
And this is important, since a singular scatter 
matrix $\bA$ cannot be used as an approximation
of $\bB$, for instance when computing a
Mahalanobis-style statistical distance as in 
\eqref{eq:MDreg} which requires the inverse matrix.

In order to stay in the land of scatter matrices
we instead compute 
$$\bC = \bB^{-1/2} \bA\, \bB^{-1/2}\;.$$
(The matrix $\bC$ can be seen as the scatter $\bA$ 
in the coordinate system where $\bB$
is sphered/whitened, since 
$\bB^{-1/2} \bB\, \bB^{-1/2} = \bI$.)
Note that $\bA = \bB$ if and only if $\bC = \bI$,
so we want to measure how far $\bC$ is from $\bI$
in a way that is relevant for scatter matrices.
Since the matrix $\bC$ is PSD its eigenvalues
are nonnegative, so we can denote them as
$\eta_1 \geqslant \ldots \geqslant \eta_d
\geqslant 0$.
We want the discrepancy measure to be zero if
all $\eta_j = 1$, to go to $+\infty$
when $\bC$ explodes in the sense that 
$\eta_1 \rightarrow +\infty$, and also when 
$\bC$ implodes, i.e. $\eta_d \rightarrow 0$.
Concentrating on a single eigenvalue $\eta$ we
want a continuous function $h(\eta)$ on all 
$\eta \geqslant 0$ with the properties 
$h(\eta) \geqslant 0$, $h(1) = 0$,
and $h(\eta) \rightarrow +\infty$ when 
$\eta \rightarrow +\infty$ or
$\eta \rightarrow 0$.
Many such functions can be constructed.
One of them is $h(\eta) = \eta - 1 -\log(\eta)$.
Note that $h(\eta) \geqslant 0$ since
$\log(\eta)$ is concave and $\eta - 1$ is its
tangent line at $\eta=1$.
The function $h$ decreases on $[0,1[$, reaches 
its minimum in 1 with $h(1)=0$, and 
increases on $]1,+\infty[$.
Therefore it makes sense to define the
discrepancy of $\bA$ relative to $\bB$ as
\begin{equation} \label{eq:discr}
   D(\bA,\bB) := \sum_{j=1}^d h(\eta_j)
	 = \sum_{j=1}^d (\eta_j - 1 -\log(\eta_j))
\end{equation}
which is nonnegative since each term is.
Note that $\bA = \bB$ is equivalent to
$D(\bA,\bB) = 0$, and that a singular $\bA$ 
attains $D(\bA,\bB) = +\infty$.

The entire construction of $D(\bA,\bB)$ above
only uses the PSD property of scatter matrices, 
and is not at all restricted to multivariate 
normally distributed data.
That confusion is due to the following
property:

\begin{proposition} \label{prop:4}
In the special case where $\bX$ and $\bY$ are
$d$-variate random vectors distributed as 
$\bX \sim N(\bzero,\bA)$ and 
$\bY \sim N(\bzero,\bB)$ in which both $\bA$
and $\bB$ are PD, the discrepancy
$D(\bA,\bB)$ coincides with the 
Kullback-Leibler divergence $\KL(\bX,\bY)$.
\end{proposition}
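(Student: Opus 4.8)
The plan is to compute the Kullback--Leibler divergence directly from its definition as the expected log-likelihood ratio under the first distribution,
\[
\KL(\bX,\bY) = E_{\bX}\!\left[\log\frac{f_{\bX}(\bX)}{f_{\bY}(\bX)}\right],
\]
and to substitute the two zero-mean Gaussian densities $f_{\bX}(\bx) = (2\pi)^{-d/2}(\det\bA)^{-1/2}\exp(-\tfrac12\,\bx'\bA^{-1}\bx)$ and the analogous $f_{\bY}$ with $\bB$ in place of $\bA$. In the log-ratio the normalizing constants $(2\pi)^{-d/2}$ cancel, leaving the deterministic term $\tfrac12\log(\det\bB/\det\bA)$ together with the quadratic forms $\tfrac12\,\bx'\bB^{-1}\bx - \tfrac12\,\bx'\bA^{-1}\bx$.

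First I would take the expectation over $\bX\sim N(\bzero,\bA)$, where the only nonelementary ingredient is the identity $E[\bX'\bM\bX]=\tr(\bM\bA)$ for symmetric $\bM$; this gives $E[\bX'\bA^{-1}\bX]=\tr(\bI)=d$ and $E[\bX'\bB^{-1}\bX]=\tr(\bB^{-1}\bA)$, hence the familiar closed form $\KL(\bX,\bY)=\tfrac12\big(\tr(\bB^{-1}\bA)-d+\log(\det\bB/\det\bA)\big)$. The decisive step is then to express this through the eigenvalues $\eta_1,\ldots,\eta_d$ of $\bC=\bB^{-1/2}\bA\,\bB^{-1/2}$: by cyclicity of the trace $\tr(\bB^{-1}\bA)=\tr(\bC)=\sum_j\eta_j$, and by multiplicativity of the determinant $\det\bA/\det\bB=\det\bC=\prod_j\eta_j$, so that $\log(\det\bB/\det\bA)=-\sum_j\log\eta_j$. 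Substituting yields $\KL(\bX,\bY)=\tfrac12\sum_j(\eta_j-1-\log\eta_j)=\tfrac12\sum_j h(\eta_j)$, which is, up to the normalizing factor discussed below, exactly $D(\bA,\bB)$. A cleaner variant that avoids the integral uses invariance of $\KL$ under the common change of variable $\bx\mapsto\bB^{-1/2}\bx$, which replaces $(\bX,\bY)$ by $(N(\bzero,\bC),N(\bzero,\bI))$ and leaves the $\eta_j$ unchanged; an orthogonal diagonalization of $\bC$ then decouples the problem into independent coordinates, and additivity of $\KL$ reduces everything to the scalar identity $\KL(N(0,\eta),N(0,1))=\tfrac12(\eta-1-\log\eta)$.

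I do not expect a genuine obstacle: the Gaussian integral is standard and the eigenvalue bookkeeping is routine. The single point requiring care is the normalizing factor $\tfrac12$ produced by the Gaussian KL formula, so that the computation literally gives $\KL(\bX,\bY)=\tfrac12\,D(\bA,\bB)$; the stated identity therefore holds under the convention in which the Kullback--Leibler divergence is taken without the factor $\tfrac12$ (equivalently, $D=2\,\KL$ in the usual normalization), and I would make sure this convention is applied consistently throughout.
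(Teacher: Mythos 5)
Your proof is correct and follows essentially the same route as the paper: reduce to the closed form $\tr(\bB^{-1}\bA)-d-\log\det(\bA\,\bB^{-1})$ and then express the trace and log-determinant through the eigenvalues $\eta_j$ of $\bC=\bB^{-1/2}\bA\,\bB^{-1/2}$, the only difference being that you derive the Gaussian KL formula from the definition while the paper simply quotes it. Your remark about the factor $\tfrac12$ is well taken: the paper's proof starts from the ``well-known formula'' $\KL(\bX,\bY)=\tr(\bA\,\bB^{-1})-d-\log\det(\bA\,\bB^{-1})$, i.e.\ it silently adopts the convention without the customary $\tfrac12$, so the stated identity $D(\bA,\bB)=\KL(\bX,\bY)$ holds exactly under that normalization and reads $D=2\,\KL$ under the textbook one --- precisely as you observed.
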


The proof is in Section \ref{A:KL} of 
the Appendix.
Note that when both $\bA$ and $\bB$ are PD
also $D(\bB,\bA)$ exists, but it does not
equal $D(\bA,\bB)$. Instead we obtain, 
along the same lines,
$D(\bB,\bA) = \sum_{j=1}^d h(1/\eta_j)$.
However, it is possible to symmetrize the
discrepancy $D(.,.)$ by replacing the $h$ in
\eqref{eq:discr} by a function $\tih$ for which
$\tih(1/\eta) = \tih(\eta)$ for all $\eta > 0$,
such as $\tih(\eta) = \eta + 1/\eta -2$.
One could also use the function 
$\tih(\eta) = |\log(\eta)|$ so 
$D(\bA,\bB)$ becomes the $L^1$ norm of
$(\log(\eta_1),\ldots,\log(\eta_d))$.

\section{Simulation results}
\label{sec:simulation}
We simulate the estimators of covariance 
matrices discussed in the previous section.
The data is generated as in Subsection 
\ref{subsec:cFsim}, with dimensions 
$d = 10$, $20$ and $40$. 
The fraction of contaminated cells is 
$\eps = 0.1, 0.2$ in which $\gamma$ varies from $1$ 
to $10$. 
In each replication we compute the discrepancy
\eqref{eq:discr} of the estimate 
$\widehat{\bSigma}$ from the underlying $\bSigma$,
and then average the discrepancy
 over all replications.
We show the results for $\eps = 0.2$, since this 
is the most challenging scenario. 
The results for $\eps = 0.1$ were qualitatively 
similar.

Figure \ref{fig:sim_d10} compares the proposed 
methods to  
the existing approaches described in Subsection 
\ref{sec:existing}, for $d=10$. 
Since $\eps = 0.2$ there are on average two
cellwise outliers per row.
Gaussian rank (Grank) and Spearman refer to the 
covariance matrices of \cite{Ollerer2015} and 
\cite{Croux2016} using those rank correlations.
The Gnanadesikan-Kettenring procedure of 
\cite{tarr2016} is labeled GKnpd. 
Next, the snipEM method of \cite{Farcomeni2014} and 
the 2SGS estimator of \cite{Leung2017} are plotted.
The method 2SGS.DI uses 2SGS as initial estimator 
followed by the new DI method of Section 
\ref{sec:DIalgo}.
Also the initial estimator DDCW described in
Section \ref{A:DDCW} of the Appendix is shown, 
as well as DI applied to it. 

\begin{figure}[!ht]
\centering
\vskip0.5cm
\begin{minipage}{0.49\linewidth}
  \centering 
    \textbf{ALYZ model, 20\% outliers, $\bm{d = 10}$}
	\includegraphics[width=0.9\textwidth]
	{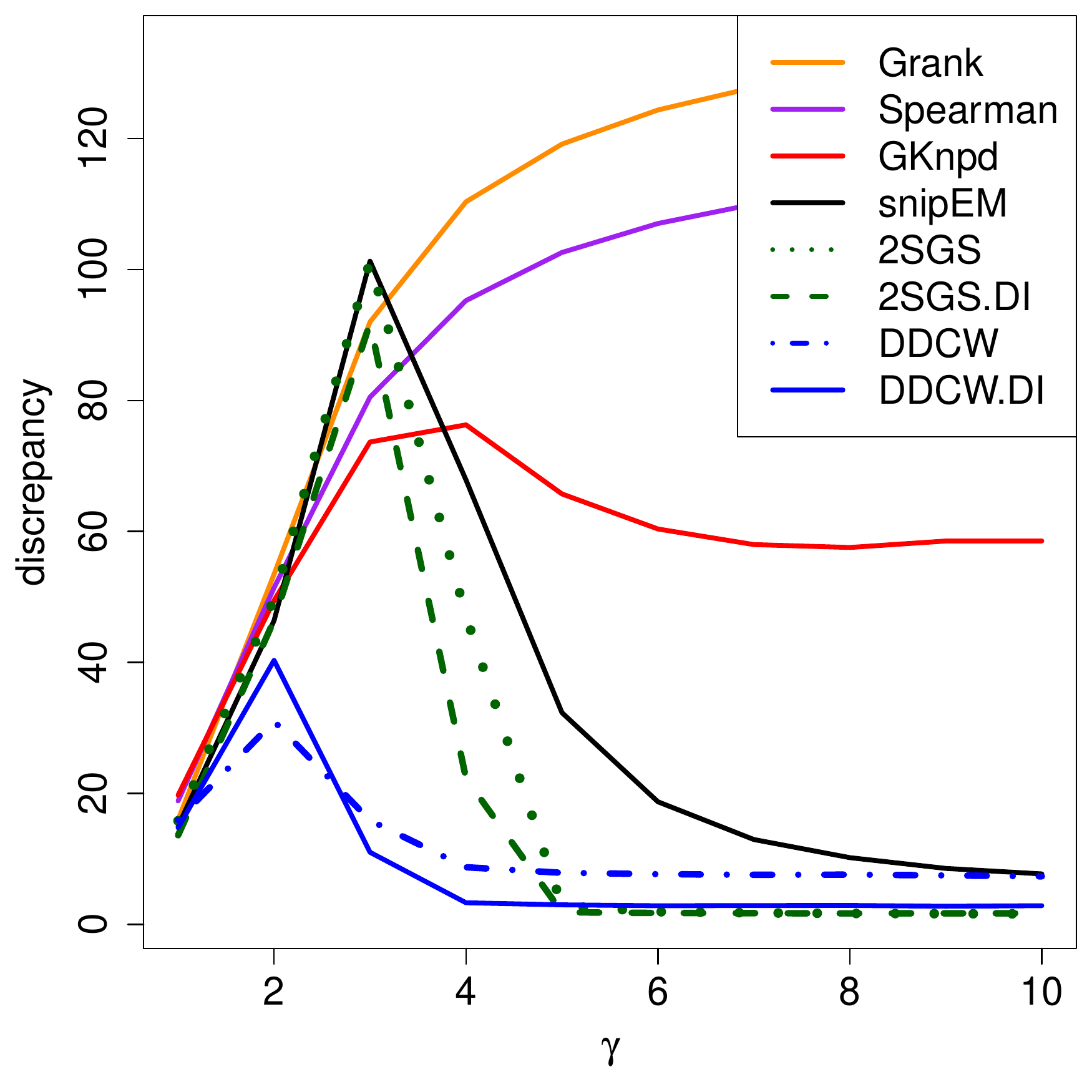} 
\end{minipage}
\begin{minipage}{0.49\linewidth}
  \centering
	  \textbf{A09 model, 20\% outliers, $\bm{d = 10}$}
  \includegraphics[width=0.9\textwidth]
	{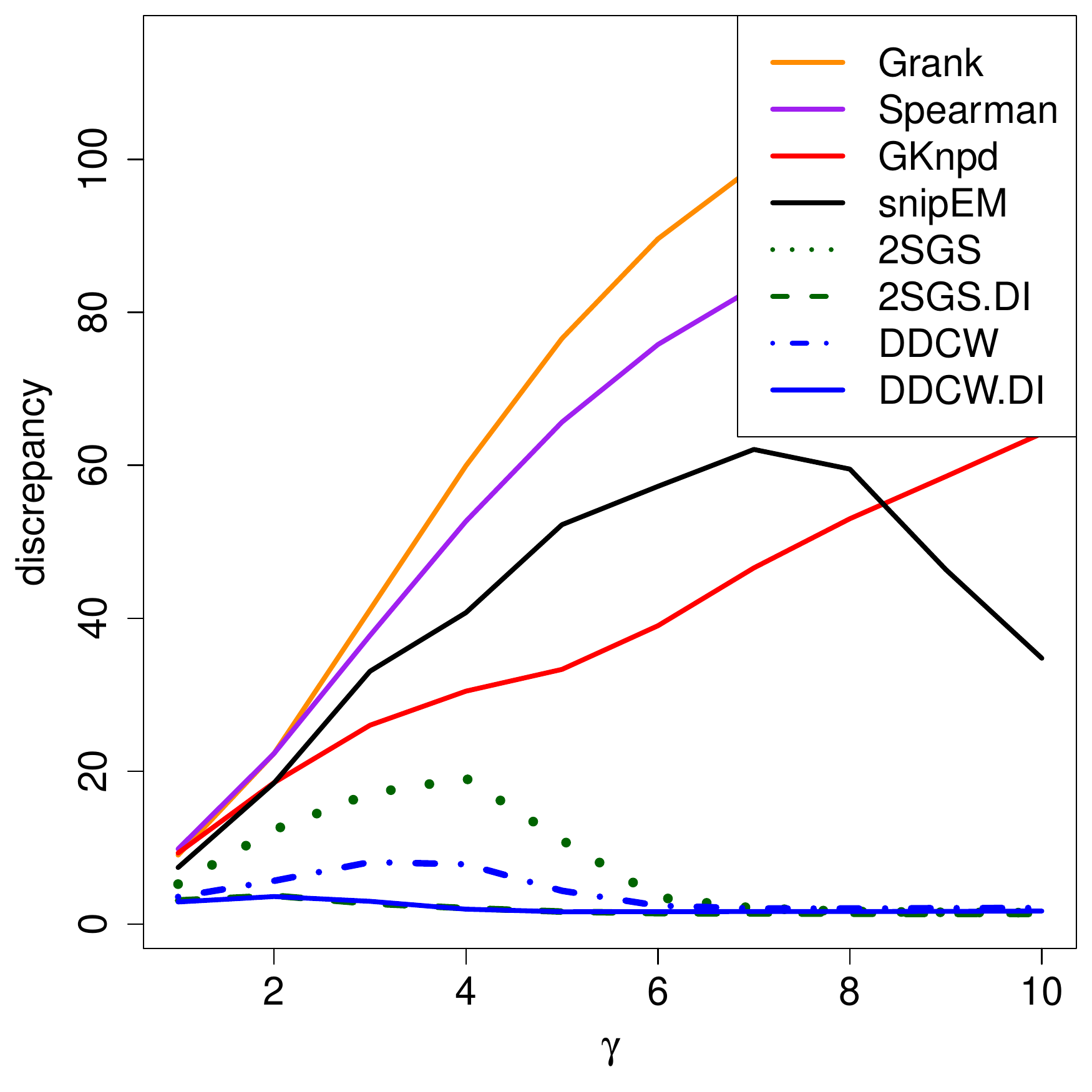} 
\end{minipage}
\vskip-0.3cm
\caption{Discrepancy $D(\widehat{\bSigma},\bSigma)$
         of estimated covariance matrices for 
         $d = 10$, $n = 100$.}
\label{fig:sim_d10}
\end{figure}

We see that the three pairwise methods Grank, 
Spearman and GKnpd pay for their fast computation
by a high discrepancy. 
The snipEM method does better for high $\gamma$,
in part because the boxplot rule in its first step 
snips marginally outlying cells.
The three pairwise methods do not use such a 
rule to flag marginally outlying cells, so high 
$\gamma$ values impact them more.
The state of the art method 2SGS does substantially
better, and is improved by applying DI to it, both
in the ALYZ and A09 models.
The same holds for DDCW and DDCW.DI.
Note that DI improves the results more under A09
than ALYZ, because A09 has bigger correlations
so DI has more opportunities to make a difference.

We now consider higher dimensions, starting with
$d=20$ in the top panels of Figure 
\ref{fig:KLdiv_highdim}. 
The curves of Grank, Spearman, GKnpd and snipEM
were much higher in this case, so we only show
the four best performing methods in order to see 
the differences between them. 
Also here the DI algorithm substantially improves
upon the initial estimators. The improvement is
largest under A09 which contains some high 
correlations.
For $d=40$ (bottom panels) we see similar patterns. 
	
\begin{figure}[!ht]
\centering
\vskip0.5cm
\begin{minipage}{0.49\linewidth}
  \centering 
    \textbf{ALYZ model, 20\% outliers, $\bm{d = 20}$}
	\includegraphics[width=0.9\textwidth]
	  {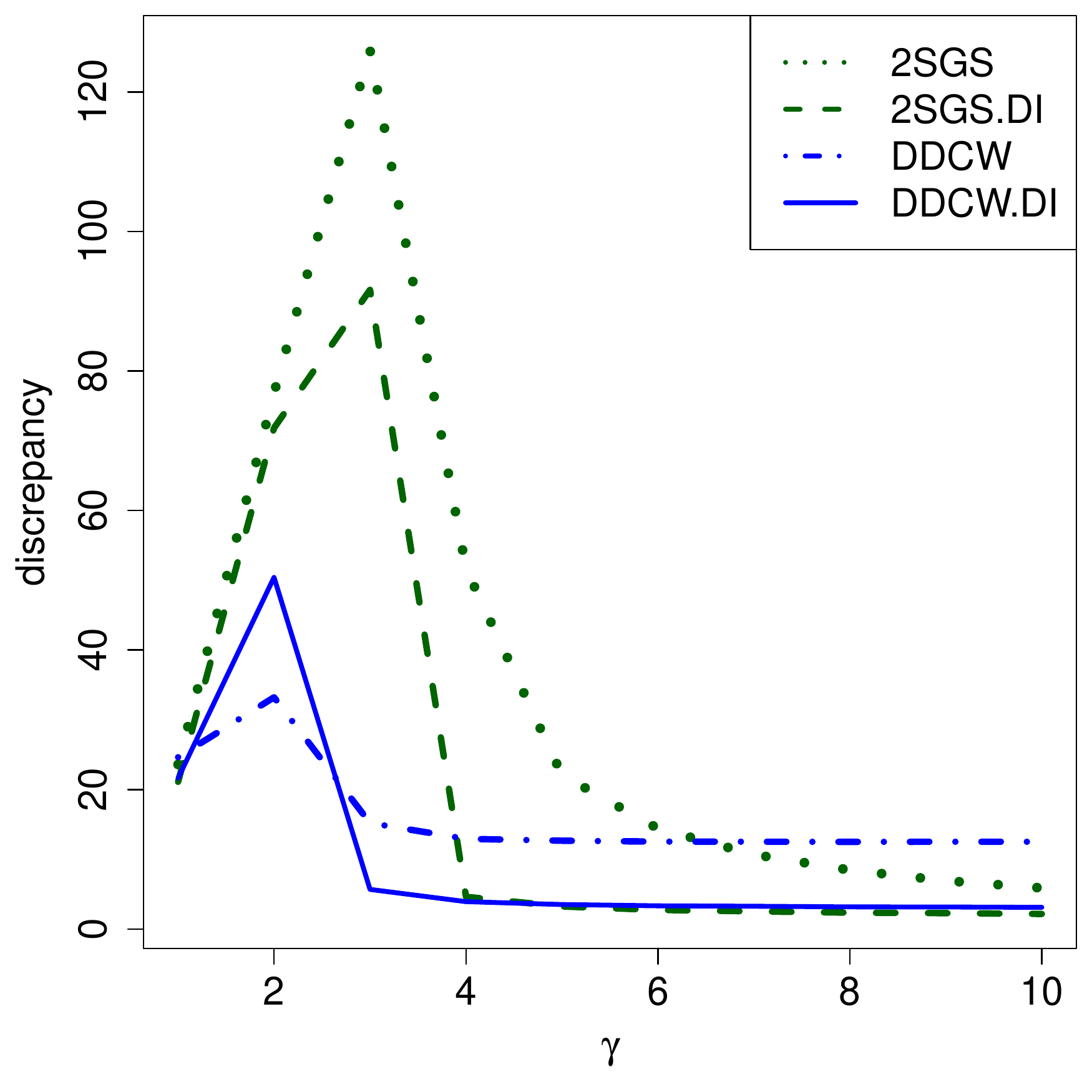} 
\end{minipage}
\begin{minipage}{0.49\linewidth}
  \centering
	  \textbf{A09 model, 20\% outliers, $\bm{d = 20}$}
  \includegraphics[width=0.9\textwidth]
	  {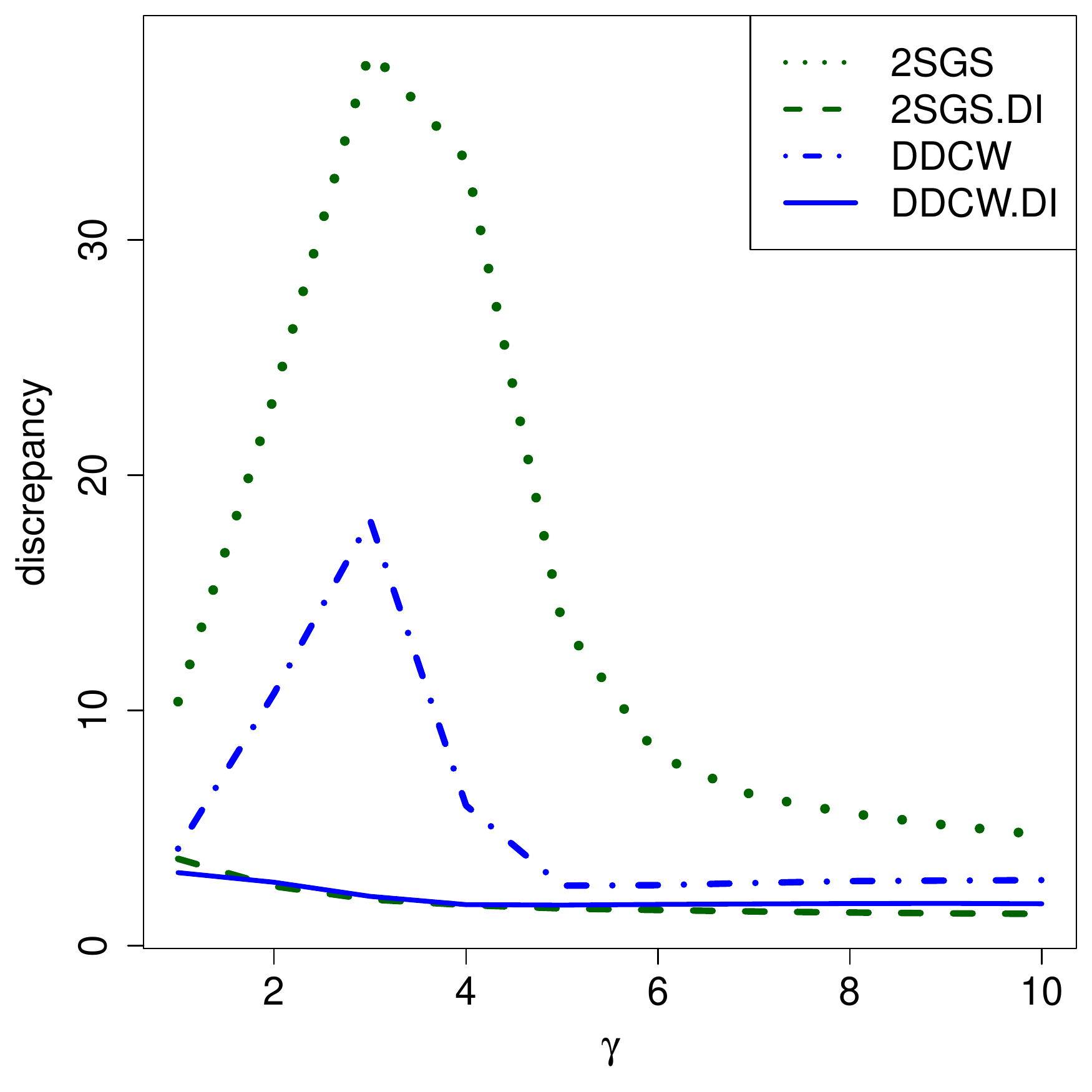} 
\end{minipage}
\vskip0.3cm
\begin{minipage}{0.49\linewidth}
\centering 
    \textbf{ALYZ model, 20\% outliers, $\bm{d = 40}$}
	\includegraphics[width=0.9\textwidth]
	  {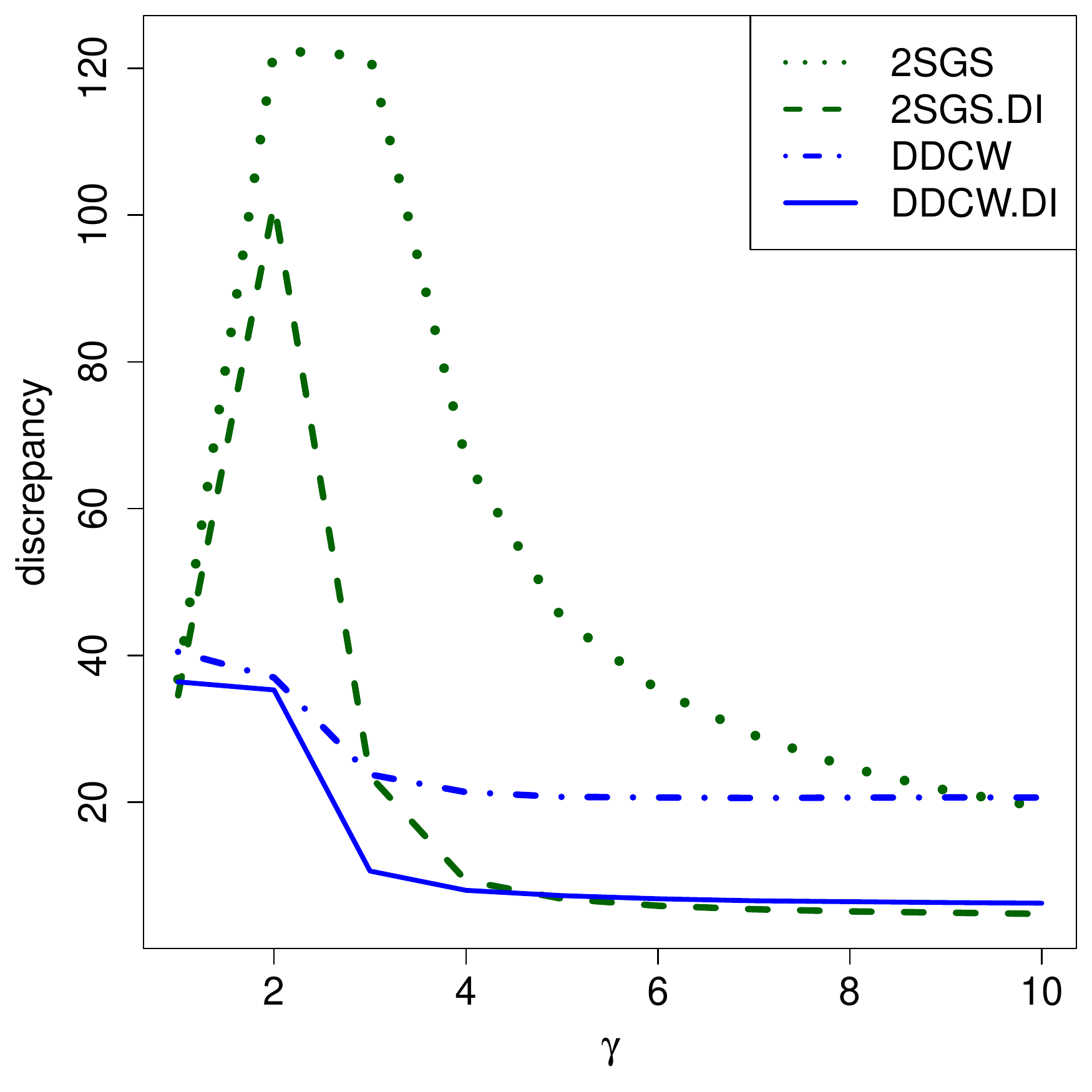} 
\end{minipage}
\begin{minipage}{0.49\linewidth}
  \centering
	  \textbf{A09 model, 20\% outliers, $\bm{d = 40}$}
  \includegraphics[width=0.9\textwidth]
	  {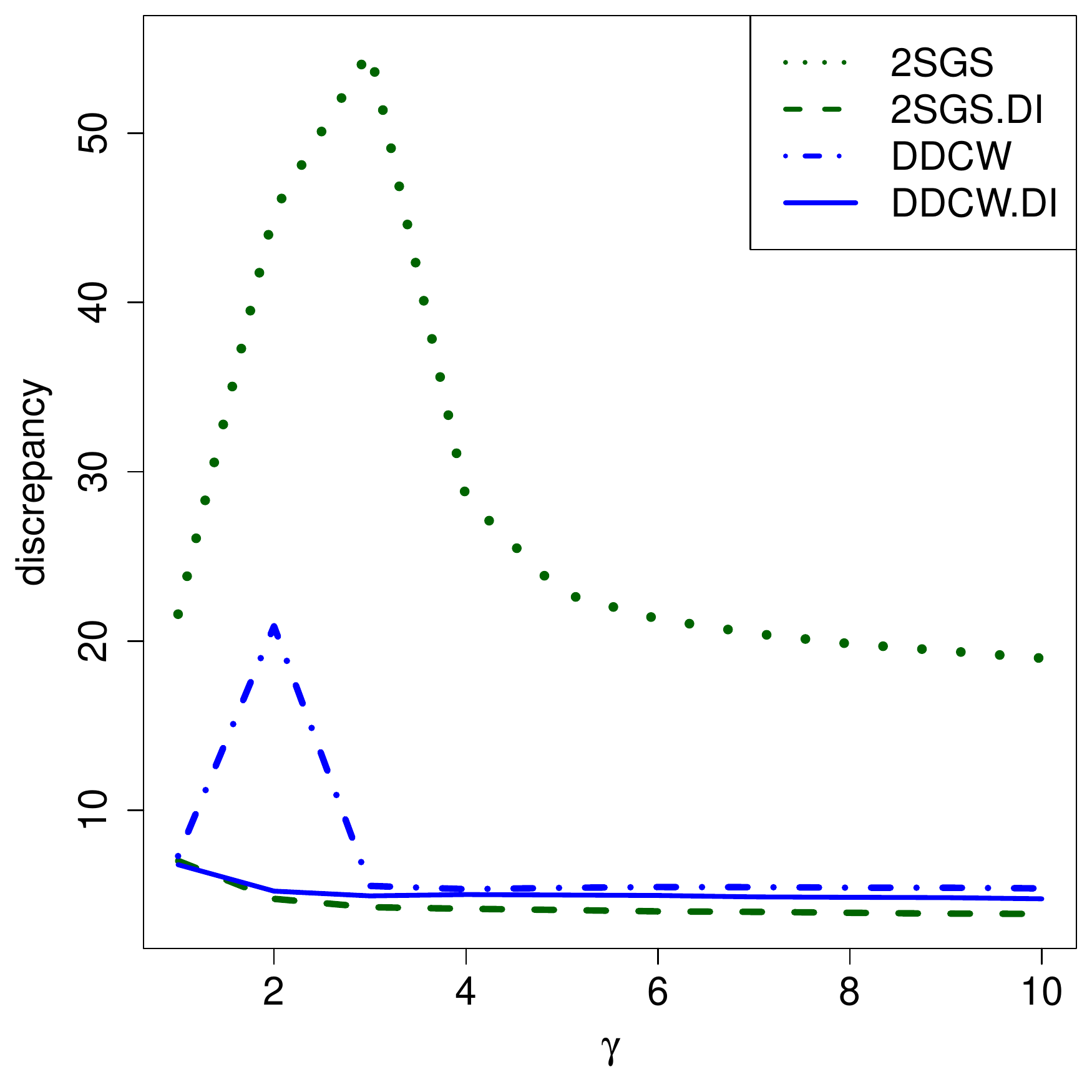} 
\end{minipage}
\vskip-0.3cm
\caption{Discrepancy $D(\widehat{\bSigma},\bSigma)$
  given by \eqref{eq:discr} of estimated covariance
  matrices for $d = 20$ and $n = 400$ (top panels)
	and for $d = 40$ and $n = 800$ (bottom panels).}
\label{fig:KLdiv_highdim}
\end{figure}

Section \ref{A:sim} of the Appendix
shows the results of a simulation in which the
data are contaminated by $10\%$ of cellwise 
outliers generated as above, plus $10\%$ of 
rowwise outliers.
In this particular setting "rowwise outliers" 
refers to rows in which all cells are contaminated 
in the same way as before, that is, rows with $d$ 
cellwise outliers.
The initial estimators 2SGS and DDCW attempt to
downweight or discard such rows.
The results are qualitatively similar to those in
Figures \ref{fig:sim_d10} and 
\ref{fig:KLdiv_highdim}.

\section{Example: volatile organic compounds in children}
\label{sec:example}

We study a dataset of volatile organic 
compounds (VOCs) in human urinary samples. 
The data was taken from the publicly available 
website of the National Health and Nutrition 
Examination Survey \citep{NHANES}, 
using the most recent available epoch.
Such VOC metabolites are commonly monitored since 
chronic exposure to high levels of some VOCs can 
lead to a number of health problems such as 
cancer and neurocognitive dysfunction. 
The original dataset consists of 29 VOC metabolites, 
but we focus on a subset of 16 variables obtained 
by removing columns with a lot of missing 
values and/or zero median absolute deviation.
Section \ref{A:VOCs} in the Appendix
contains a table with the VOCs analyzed. 
In order to obtain a relatively homogeneous subset, 
we selected the data for children aged 10 or younger.
The final dataset contained 512 subjects.
We log-transformed the concentrations to make the 
variables roughly Gaussian 
(apart from possible outliers).

We estimated the covariance matrix of
the data by the DI algorithm, starting from 
the DDCW initial estimator.
The algorithm converged after 7 steps. 
Using the resulting covariance estimate we ran 
the cellHandler algorithm with cutoff
$\sqrt{\chi^2_{1,0.99}} \approx 2.57$
to detect outlying cells. 
The corresponding cellmap of the first 20 
children in the list was shown as
Figure \ref{fig:cellmap} in the introduction.
Each row of the cellmap corresponds to a child, 
with inlying cells colored yellow.
Red colored cells indicate that their value is 
higher than predicted given the inlying cells 
of that row, while blue cells indicate lower 
than predicted values. 
The more extreme the residual, the more 
intense the color.

One variable that stood out was URXCYM 
(N-Acetyl-S-(2-cyanoethyl)-L-cysteine) in 
which cellHandler indicated $11\%$ of large 
cell residuals.
This was particularly striking since that 
variable had fewer than $2\%$ of marginal
outliers using the same cutoff
$\sqrt{\chi^2_{1,0.99}}$
on the absolute standardized values,
and these were rather nearby
(note that even for perfectly Gaussian data
there would already be $1\%$ of absolute 
standardized values above this cutoff).
Figure \ref{fig:ZresVersusZ} plots the
cell residuals (which are zero for cells
that were not flagged) versus the robustly
standardized marginal values, with the 
cutoffs indicated by horizontal and 
vertical lines.
Most of the outlying cellwise residuals
correspond to inlying marginal values. 
These children have extreme URXCYM values
relative to their other VOCs.

\begin{figure}[!ht]
\center
\vskip0.4cm
\includegraphics[width = 0.55\textwidth]
  {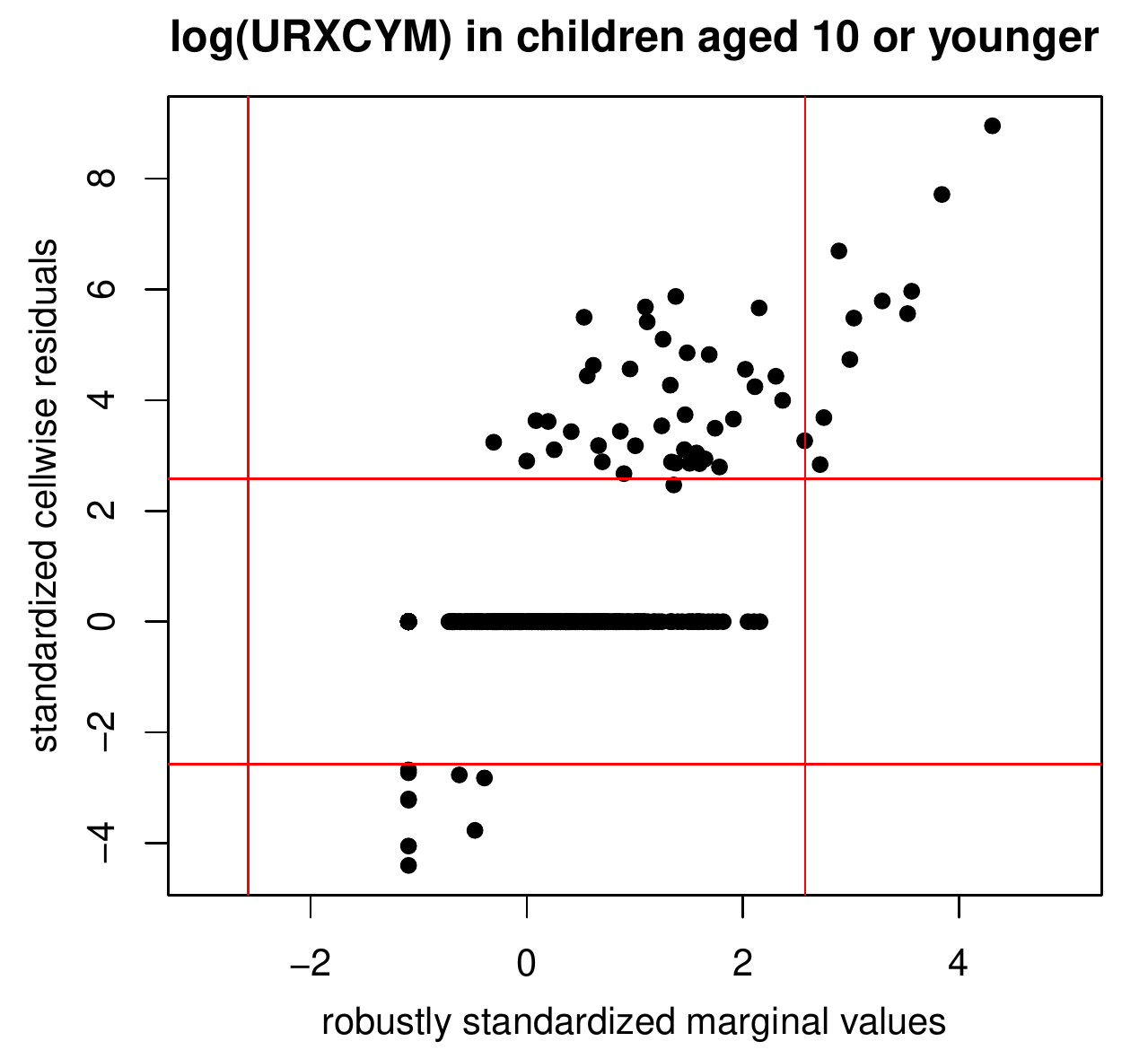}
\vskip-0.2cm
\caption{Plot of standardized cell residuals
of log(URXCYM) obtained by cellHandler, versus
the robustly standardized values of
log(URXCYM) on its own.}
\label{fig:ZresVersusZ}
\end{figure}

Interestingly, URXCYM is a well known 
biomarker for identifying smokers among adults, 
see e.g. \cite{Chen2019}, since it typically 
results from the metabolization of acrylonitrile, 
a volatile liquid present in tobacco smoke. 
But in this example we are studying children, 
who are not supposed to smoke.
In search of an explanation we combined the VOC data 
with the questionnaire data available on the same 
website \citep{NHANES}.
Among many other things, these data contain 
information on the smoking status of the adults
(usually parents) in the same household.
These fell into four categories: only nonsmoking
adults, smoking adults who do not smoke inside
the home, one adult smoking in the home, and 
two adults smoking in the home.
The blue curve in Figure \ref{fig:VOCS} shows
the percentage of children with URXCYM cell
residuals above the cutoff, in each of these
categories.
They go from 4.7\% in households with only
nonsmoking adults up to 72.7\% in homes where
two adults smoke, indicating that passive
smoking has a measurable effect on children.
On the other hand, if we were to look only at
the marginal URXCYM values (red curve) no such
effect is visible.

\begin{figure}[!ht]
\center
\vskip0.2cm
\includegraphics[width = 0.58\textwidth]
  {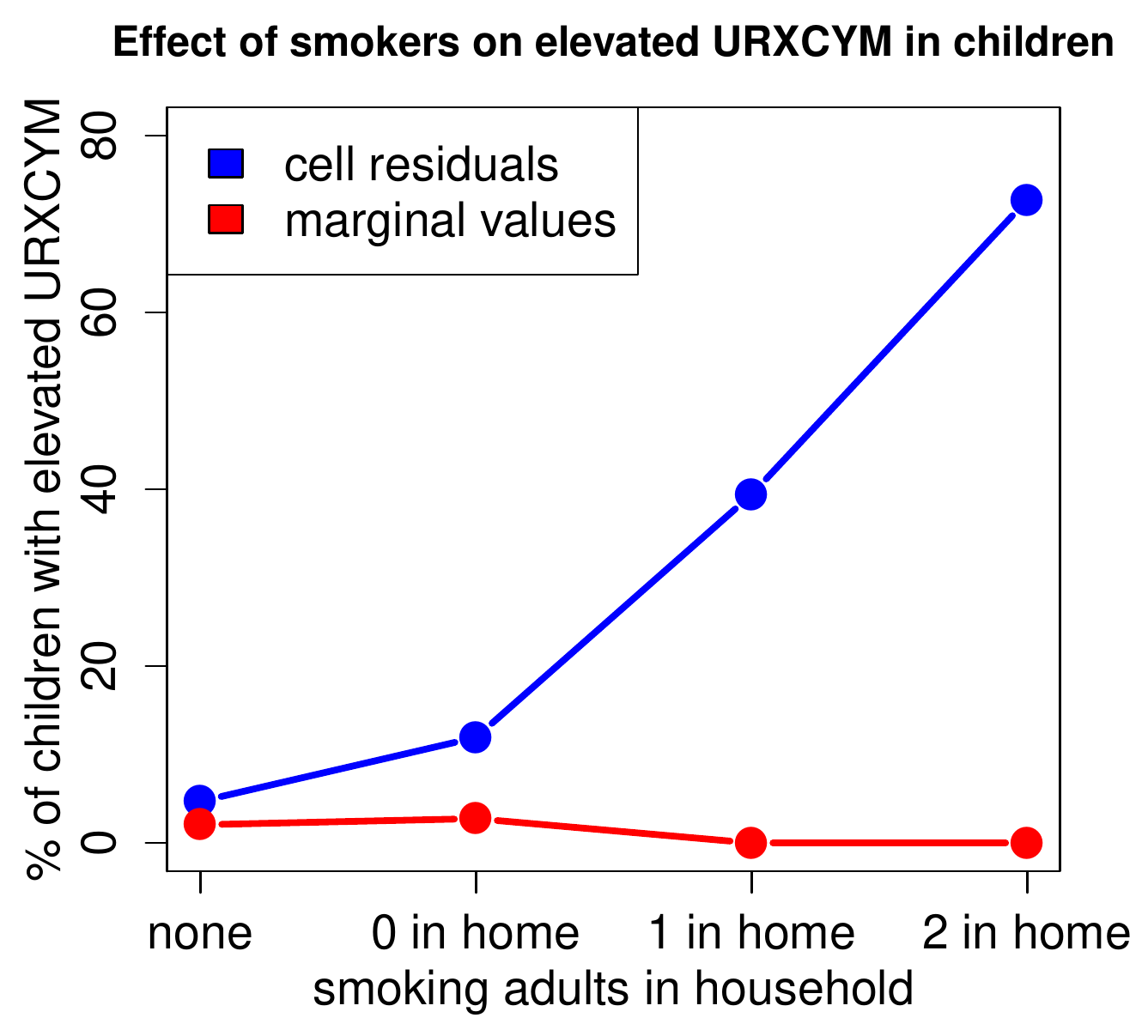}
\vskip-0.2cm
\caption{The blue curve shows the percentage 
of elevated URXCYM cell residuals in function 
of the smoking status of adult family members. 
The red curve shows the percentage of elevated
marginal URCYM values.}
\label{fig:VOCS}
\end{figure}

The example shows that the effect of 
exposing children to tobacco smoke could be  
underestimated when only performing 
univariate analyses on biomarkers. 
This illustrates that cell residuals obtained 
by cellHandler may add valuable information 
to a dataset.

\section{Conclusion}
The proposed cellHandler method is the first
to detect cellwise outliers 
based on robust estimates of location and
covariance.
It is also a major component of
the detection-imputation (DI) algorithm that
computes such cellwise robust estimates.
Note that both methods can deal with 
missing values in the data, since these are
imputed along the way.

The performance of cellHandler and DI was 
illustrated by simulation.
A real example illustrated that the common
medical practice of comparing individual
biomarkers to their tolerance limits
can benefit from the use of cellwise 
residuals. \\

\noindent \textbf{Acknowledgment.} 
This research	was funded by 
projects of Internal Funds KU Leuven.\\	

\noindent \textbf{Software Availability.}
The methods proposed here are
available as the functions\linebreak 
\texttt{cellHandler} and \texttt{DI} in the
\texttt{R} package \texttt{cellWise} 
\citep{cellWise} on CRAN.\\ 
It also contains a vignette which reproduces 
the example in Section \ref{sec:example}.\\




\newpage

\begin{appendix}

\section{Proof of Proposition 2}
\label{A:Estep}
\begin{proof}
We first split up the relevant matrices in blocks.
Denote 
$$\bSigma^{-1} = \begin{bmatrix}
\bSigma_{11}^* & \bSigma_{12}^* \\
\bSigma_{21}^* & \bSigma_{22}^*
\end{bmatrix}
\quad \quad \mbox{  and  } \quad \quad
\bSigma^{-1/2} = \begin{bmatrix}
\btSigma_{11} & \btSigma_{12} \\
\btSigma_{21} & \btSigma_{22}
\end{bmatrix}\;\;.$$
Let the $k$-variate $\bhtheta$ be the solution to 
the ordinary least squares regression problem
\begin{equation*}
\argmin_{\btheta} ||\bSigma^{-1/2}(\bz - \bmu)- 
   (\bSigma^{-1/2})_{\bdot 1} \btheta||_2^2 
\end{equation*}
where $(\bSigma^{-1/2})_{\bdot 1}$ denotes the 
first $k$ columns of the matrix $\bSigma^{-1/2}$.\\	
Then we know that 
\begin{align*}
\bhtheta &= \left(\begin{bmatrix}
\btSigma_{11}  \\
\btSigma_{21}
\end{bmatrix}'\, \begin{bmatrix}
\btSigma_{11}  \\
\btSigma_{21}
\end{bmatrix} \right)^{-1} \begin{bmatrix}
\btSigma_{11}  \\
\btSigma_{21}
\end{bmatrix}'\, \bSigma^{-1/2}(\bz- \bmu)\\
& = (\bSigma_{11}^*)^{-1} \begin{bmatrix}
\btSigma_{11}  \\
\btSigma_{21}
\end{bmatrix}'\, \bSigma^{-1/2}(\bz-\bmu).
\end{align*} 
Now observe that
\begin{align*}
\bz_1 - \bhtheta &= \bz_1 - (\bSigma_{11}^*)^{-1} 
\begin{bmatrix}
\btSigma_{11}  \\
\btSigma_{21}
\end{bmatrix}'\, \bSigma^{-1/2}(\bz-\bmu) \\
&= \bz_1 - (\bSigma_{11}^*)^{-1} [\bSigma_{11}^* \; 
   \bSigma_{12}^*][\bz_1-\bmu_1 \;\; \bz_2-\bmu_2]\\
&= \bz_1 - [I_k \; \; (\bSigma_{11}^*)^{-1}
   \bSigma_{12}^*][\bz_1-\bmu_1 \;\; \bz_2-\bmu_2]\\
&= \bz_1 - (\bz_1 - \bmu_1) - 
   (\bSigma_{11}^*)^{-1}\bSigma_{12}^*(z_2-\bmu_2)\\
&= \bmu_1 + \bSigma_{12}\bSigma_{22}^{-1}
   (\bz_2 -\bmu_2)
\end{align*}
where the last equality follows from 
$ - (\bSigma_{11}^*)^{-1}\bSigma_{12}^*
 = \bSigma_{12}\bSigma_{22}^{-1}$ iff 
$-\bSigma_{12}^*\bSigma_{22} = 
\bSigma_{11}^*\bSigma_{12}$ iff 
$[\bSigma_{11}^* \; \bSigma_{12}^*] 
[\bSigma_{12} \bSigma_{22}]'\, = 0$ 
which follows from $\bSigma^{-1}\bSigma = I$. 
\end{proof}

\section{Proof of Proposition 3}
\label{A:chisq}
\begin{proof}
We will use the notation  
$\bhtheta_1 = \argmin_{\btheta_1} 
 ||\bSigma^{-1/2}(\bz - \bmu)- 
 (\bSigma^{-1/2})_{\bdot 1} \btheta_1||_2^2$ for
the OLS fit, and 
$\mbox{RSS}_k  = ||\bSigma^{-1/2}(\bz - \bmu)- 
 (\bSigma^{-1/2})_{\bdot 1} \bhtheta_1||_2^2$.

\noindent \textbf{Part 1.}\\
For $k=d$ we know from \eqref{eq:MDreg} that
$\mbox{RSS}_d = 0$. 
Now let $1 \ls k \ls d-1$. We want to show that 
$\mbox{RSS}_k = (\bz_2 - \bmu_2)'\,
 \bSigma_{22}^{-1} (\bz_2-\bmu_2)$. 
Let $\bhtheta \coloneqq 
[\bhtheta_1'\, \; 0 \; \ldots \; 0]'$
be the $d$-variate vector with coefficients 
$\bhtheta_1$ followed by $d - k$ zeroes.

We now have that 
\begin{align*}
\mbox{RSS}_k  &= ||\bSigma^{-1/2}(\bz-\bmu)- 
  (\bSigma^{-1/2})_{\bdot 1} \bhtheta_1||_2^2\\
&= ||\bSigma^{-1/2}(\bz - \bmu)- 
  \bSigma^{-1/2}\bhtheta||_2^2\\
&= (\bz - \bmu - \bhtheta)'\, \bSigma^{-1}
   (\bz - \bmu - \bhtheta)\\
&= [\bz_1 - \bmu_1 - \bhtheta_1 \;\;\;
    \bz_2 - \bmu_2]'\,\bSigma^{-1}\,
	 [\bz_1 - \bmu_1 - \bhtheta_1 \;\;\; 
	  \bz_2 - \bmu_2]\;\;.
\end{align*}

Following page 47 of \cite{Petersen2012} we can 
write $\bSigma^{-1} = \bA \bB \bA'\,$ with 
$$\bA \coloneqq 
\begin{bmatrix}
\bI & \bzero  \\
-\bSigma_{22}^{-1}\bSigma_{21} & \bI
\end{bmatrix}
\quad \quad \mbox{ and } \quad \quad 
\bB \coloneqq 
\begin{bmatrix}
\bC_1^{-1} & \bzero  \\
\bzero & \bSigma_{22}^{-1}
\end{bmatrix}$$
where $\bC_1 \coloneqq \bSigma_{11} - \bSigma_{12}
 \bSigma_{22}^{-1}\bSigma_{21}$\,.
We now have that
\begin{align*}
  [\bz_1 - \bmu_1 - \bhtheta_1 \;\;\; 
	 \bz_2 - \bmu_2]'\, \bA
	&= [\bz_1 - \bmu_1 - \bhtheta_1 -
	 (\bz_2 - \bmu_2)\bSigma_{22}^{-1}\bSigma_{21} 
	\;\;\; \bz_2 - \bmu_2]' \\
	&= [\bzero \;\;\; \bz_2 - \bmu_2]'
\end{align*}
using the result of Proposition 2. Therefore,
\begin{align*}
\mbox{RSS}_k  &= [\bz_1 - \bmu_1 - \bhtheta_1
    \;\;\; \bz_2 - \bmu_2]'\,\bSigma^{-1}\,
	 [\bz_1 - \bmu_1 - \bhtheta_1 
	  \;\;\; \bz_2 - \bmu_2]\\
&= [\bzero \;\;\; \bz_2 - \bmu_2]'\,\bB\,
   [\bzero \;\;\; \bz_2 - \bmu_2]\\
&= (\bz_2 - \bmu_2)'\,\bSigma_{22}^{-1}\,
   (\bz_2 - \bmu_2) \;\;.
\end{align*}

\noindent \textbf{Part 2.}\\
We will now show that the differences in $\mbox{RSS}$ 
follow a $\chi^2(1)$ distribution, that is\linebreak
$\Delta_k \coloneqq \mbox{RSS}_{k-1} -
 \mbox{RSS}_k \sim \chi^2(1)$ assuming
that $\bz = [\bz_1' \; \bz_2']'$ is multivariate
Gaussian with mean $\bmu$ and covariance matrix
$\bSigma$.
For $k=0$ we set by convention 
$\bhtheta \coloneqq \bzero$ 
and $\mbox{RSS}_0 \coloneqq
 ||\bSigma^{-1/2}(\bz - \bmu)||_2^2$.
 
We show the result for $k = 1$ as the subsequent steps
are analogous. 
The reasoning below is similar to Appendix A.2 of
\cite{danilov2010} where the cells were not yet 
ranked from most to least outlying.
As in Part 1 of the proof we can write
$\bSigma^{-1} = \bA \bB \bA'\,$ with 
$$\bA = 
\begin{bmatrix}
 1 & \bzero  \\
 -\bSigma_{22}^{-1}\bSigma_{21} & \bI
\end{bmatrix}
\quad \quad \mbox{ and } \quad \quad 
\bB = 
\begin{bmatrix}
C_1^{-1} & \bzero  \\
\bzero & \bSigma_{22}^{-1}
\end{bmatrix}$$
where this time $C_1 = \bSigma_{11} - \bSigma_{12}
 \bSigma_{22}^{-1}\bSigma_{21}$ is a scalar.
We can then write
\begin{align*}
\mbox{RSS}_0 
&= (\bz-\bmu)'\,\bSigma^{-1}(\bz-\bmu)\\
&= (\bz-\bmu)'\,\bA \bB \bA'\,(\bz-\bmu)\\
&= [z_1 - \mu_1 - (\bz_2 - \bmu_2)
    \bSigma_{22}^{-1}\bSigma_{21} \;\;\; 
		\bz_2 - \bmu_2]'\,\bB\, 
   [z_1 - \mu_1 - (\bz_2 - \bmu_2)
    \bSigma_{22}^{-1}\bSigma_{21} \;\;\; 
		\bz_2 - \bmu_2]	\\	
&= \left((z_1 -\mu_1^*)/\sigma_1^*\right)^2 +
   (\bz_2 - \bmu_2)'\,\bSigma_{22}^{-1}(\bz_2 - \bmu_2)\\
&= ((z_1 -\mu_1^*)/\sigma_1^*)^2 + \mbox{RSS}_1 
\end{align*}
where $\mu_1^* \coloneqq \mu_1 + (\bz_2 - \bmu_2)
\bSigma_{22}^{-1}\bSigma_{21}$ and 
$\sigma_1^* \coloneqq \sqrt{C_1}$. 
So we obtain 
$$\Delta_1 = \mbox{RSS}_0 - \mbox{RSS}_1 =
  ((z_1 -\mu_1^*)/\sigma_1^*)^2$$
and this is the square of a standard Gaussian 
variable since $z_1 - \mu_1^*$
is Gaussian with expectation 0 and standard 
deviation $\sigma_1^*$\,. 
We thus have $\Delta_1 \sim \chi^2(1)$.
\end{proof}

\section{Implementation of the cellHandler algorithm}
\label{A:cellHandler}
The LAR component of cellHandler is a regression of
$\btY$ on $\bsX$ as defined in the paper. 
Since this regression has no intercept and we need
to preserve the column scaling in $\bsX$, we run 
the function {\it lars::lar} with the options 
{\it intercept=F} and {\it normalize=F}.

For the imputations in Proposition 2 and the
RSS in Proposition 3 we require the OLS fits
$\,\bhtheta_A\,$ minimizing
$\;||\bSigma^{-1/2}(\bz-\bmu)-
   (\bSigma^{-1/2})_A\, \btheta_1||_2^2\;$ 
where $A$ is the set of active predictor variables
in every step of LAR. 
Fortunately, these can be obtained without 
significant additional computation time because each 
step of LAR already carries out the QR decomposition
of $\,(\bsX_A)'\,\bsX_A\,$  where $\bsX_A$ is the 
submatrix of $\bsX$ consisting of the columns in $A$.
The resulting OLS regression vectors $\,\bhbeta_A\,$ 
obtained by LAR (which contain zeroes for the 
inactive variables) are then easily rescaled 
to $\,\bhtheta_A = \bW^{-1} \bhbeta_A$\;.

\section{Description of the initial estimator DDCW}
\label{A:DDCW}

The Detection-Imputation (DI) method of Section
\ref{sec:DIalgo} needs initial cellwise robust
estimates $\bhmu^0$ and $\bhSigma^0$ of 
location and covariance.
One option is to insert the 2SGS estimator of
\cite{Leung2017}.
We also developed a different initial estimator
called DDCW, which we describe here.
Its steps are:
\begin{enumerate}
\item Drop variables with too many missing values
      or zero median absolute deviation, and
			continue with the remaining columns.
\item Run the DetectDeviatingCells (DDC) method
      \citep{Rousseeuw2018} with the constraint
			that no more than $n\,maxCol$ cells are
			flagged in any variable.
			DDC also rescales the variables, and may
			delete some cases.
			Continue with the remaining imputed and
			rescaled cases denoted as $\bz_i$\,.
\item Project the $\bz_i$ on the axes of
      their principal components, yielding the 
			transformed data points	$\btz_i$\;.
\item Compute the wrapped location $\bhmu_w$
      and covariance matrix $\bhSigma_w$
			\citep{Raymaekers2018} of these 
			$\btz_i$\,. 
			Next,	compute the temporary points
			$\bu_i = (u_{i1},...,u_{id})$
			given by $u_{ij} = \max\{\min\{
			\tilde{z}_{ij}-(\bhmu_w)_j,2\},-2\}$.
			Then remove all cases for which the 
			squared robust distance
			$\RD^2(i) = 
			 \bu_i' \bhSigma_w^{-1} \bu_i$
			exceeds 
			$\chi^2_{d,q} 
			 \median_h(\RD^2(h))/\chi^2_{d,0.5}$\;.
\item Project the remaining $\btz_i$ on the
      eigenvectors of $\bhSigma_w$ and
			again compute a wrapped location and
			covariance matrix.
\item Transform these estimates back to
      the original coordinate system of
			the imputed data, and undo the
			scaling.
			This yields the estimates 
			$\bhmu^0$ and $\bhSigma^0$\,.
\end{enumerate}
Note that DDCW can handle missing values since
the DDC method in step 2 imputes them.
The reason for the truncation in the rejection
rule in step 4 is that otherwise the robust
distance $\RD$ could be inflated by a single
outlying cell.
Step 4 tends to remove rows which deviate
strongly from the covariance structure. 
These are typically rows which cannot be shifted 
towards the majority of the data without 
changing a large number of cells.

\section{Step by step 
  description of the DI algorithm}
\label{A:DI}
We now give a step-by-step description of the
DI algorithm, with some additional details.
\begin{enumerate}
\item Standardize the columns (variables) as 
      described in the beginning of 
	    Section \ref{sec:ranking}.
\item Compute initial estimates $\bhmu^0$ 
      and $\bhSigma^0$. The algorithm 
			currently has two options for this:
	\begin{itemize}
	\item the DDCW estimator described in Section
        \ref{A:DDCW} above;
	\item the 2SGS estimator of \cite{Leung2017},
	      available in the \texttt{R} package
				GSE \citep{Leung2019}.
	\end{itemize}
\item {\bf D-step}.
  Given the estimates $\bhmu^{t-1}$ and 
	$\bhSigma^{t-1}$ where $t=1,2,\ldots$ we 
	flag outlying cells across all rows of 
	the dataset. This is done as described in
	Section \ref{sec:DIalgo} by applying the
	cellHandler method of Section 
	\ref{sec:cellHandler} to each row $\bz_i'$ 
	using $\bhmu^{t-1}$ and $\bhSigma^{t-1}$.
	The D-step imposes a maximum on the number of
  flagged cells in a row, namely 
  $n\, \mbox{\it maxCol}$ where {\it maxCol} 
  is set to 25\% by default.
  Since all missing values (NA's) are automatically 
  flagged, the algorithm would not be able to run 
  if there are too many NA's in a column.
  In practice, the algorithm starts by setting
  variables with too many NA's aside and giving a
  message about this. The D-step yields a list of 
	flagged cells in each row, which contains 
	the flagged outlying cells as well as their 
	imputed values.
\item{\bf I-step}. We re-estimate the center as 
  $\bhmu^t$ which is the mean of the dataset with 
	its imputed cells.
  For computing $\bhSigma^t$ we use the formula of 
	the M-step in the EM-algorithm. 
	It is not just the covariance matrix of the 
	imputed data, since this would underestimate the 
	true variability.
	Therefore the EM method adds a bias correction.
  This bias correction depends on which cells were 
  imputed, and can therefore be different for every 
  row of the data. 
  Suppose the first row $\bz_1$ has an imputed part 
  $\bz_{1i}$ and an untouched part $\bz_{1u}$\,, 
  then the bias correction matrix from that row is
  $$B_{ii} = \frac{1}{n}\bhSigma^{t-1}_{uu} -
    \frac{1}{n}\bhSigma^{t-1}_{iu} 
	  (\bSigma^{t-1}_{uu})^{-1} 
	  \bSigma^{t-1}_{ui}\;.$$ 
  This correction term is known to remove the bias 
	when the data is uncontaminated multivariate Gaussian
  with missing values generated completely at random
  (MCAR), that is, independent of both the observed
  cells as well as the values the missing cells
  had before they became unavailable. 
  Also in our simulations with contaminated data this 
  bias correction turned out to improve the results.
\item Iterate steps 3 and 4 alternatingly until
	$$||\bhmu^{t} - \bhmu^{t-1}||_2^2 +
	  ||\bhSigma^{t} - \bhSigma^{t-1}||_2^2$$ 
	is below a given tolerance, where the 
  second norm is given by \eqref{eq:Frobenius}.
\item Apply cellHandler with the converged $\bhmu$ 
  and $\bhSigma$ to obtain the final list of cellwise 
	outliers and their imputed values.
\item Unstandardize the results using the univariate 
  location and 
	scale estimates of the original data columns,
	used in step 1.
\end{enumerate}

\section{Proof of Proposition 4}
\label{A:KL}
\begin{proof} Starting from the well-known
formula for $\KL(\bX,\bY)$ we obtain
\begin{align*}
\KL(\bX,\bY) 
&= \tr(\bA\,\bB^{-1}) - d 
   - \log\det(\bA\,\bB^{-1})
   \nonumber \\
&= \tr(\bA\,\bB^{-1/2}\bB^{-1/2}) - d - 
   \log \det(\bA\,\bB^{-1/2}\bB^{-1/2})
	 \nonumber \\
&= \tr(\bB^{-1/2}\bA\,\bB^{-1/2}) - d - 
   \log \det(\bB^{-1/2}\bA\,\bB^{-1/2})
	 \nonumber \\
&= \Big( \sum_{j=1}^d \eta_j \Big) - d 
   - \log \Big( \prod_{j=1}^d \eta_j \Big)
	 \nonumber \\
&= \sum_{j=1}^d (\eta_j - 1 
   - \log(\eta_j)) = D(\bA,\bB) 	
\end{align*}
where the fourth equality used the fact that
$\bB^{-1/2}\bA\,\bB^{-1/2}$ is PSD so it can be
diagonalized, hence its trace is the sum of its
eigenvalues.
\end{proof}

\section{F-scores in dimensions 10, 20 and 40}
\label{A:Fscores}
Figure \ref{fig:cellHandler_perout0.2} showed
the precision, recall, and F-score for data
generated by the contaminated ALYZ model and 
the contaminated  A09 model, for
$n=400$ points in $d=20$ dimensions. 
Here Figure \ref{fig:Fscores} shows the F-scores 
for both DDCW.DI (DI starting from DDCW)
and 2SGS.DI (DI starting from 2SGS).
These are byproducts of the simulations in
Figure \ref{fig:sim_d10} 
for $(n,d)=(100,10)$ and 
Figure \ref{fig:KLdiv_highdim} 
for $(n,d)=(400,20)$ and $(800,40)$. 
Note that step 6 of the DI algorithm in
Appendix \ref{A:DI} provides the flagged
cells, so the DDCW.DI curves for $d=20$
in Figure \ref{fig:Fscores} correspond to 
those of cellHandler in the lower panel of 
Figure \ref{fig:cellHandler_perout0.2}.

\begin{figure}[!ht]
\center
\vskip-0.5cm
\includegraphics[width = 0.43\textwidth]
 {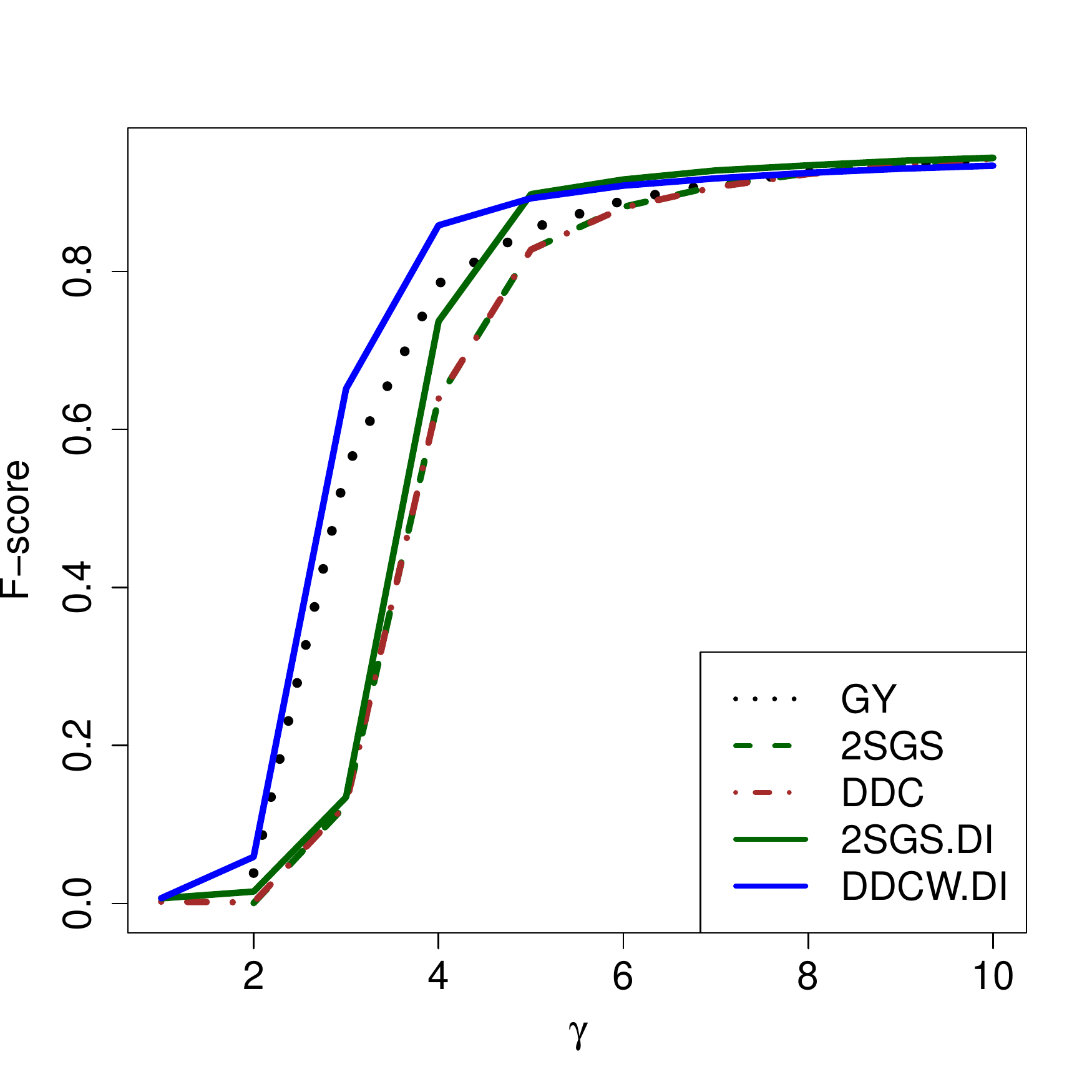}
\includegraphics[width = 0.43\textwidth]
 {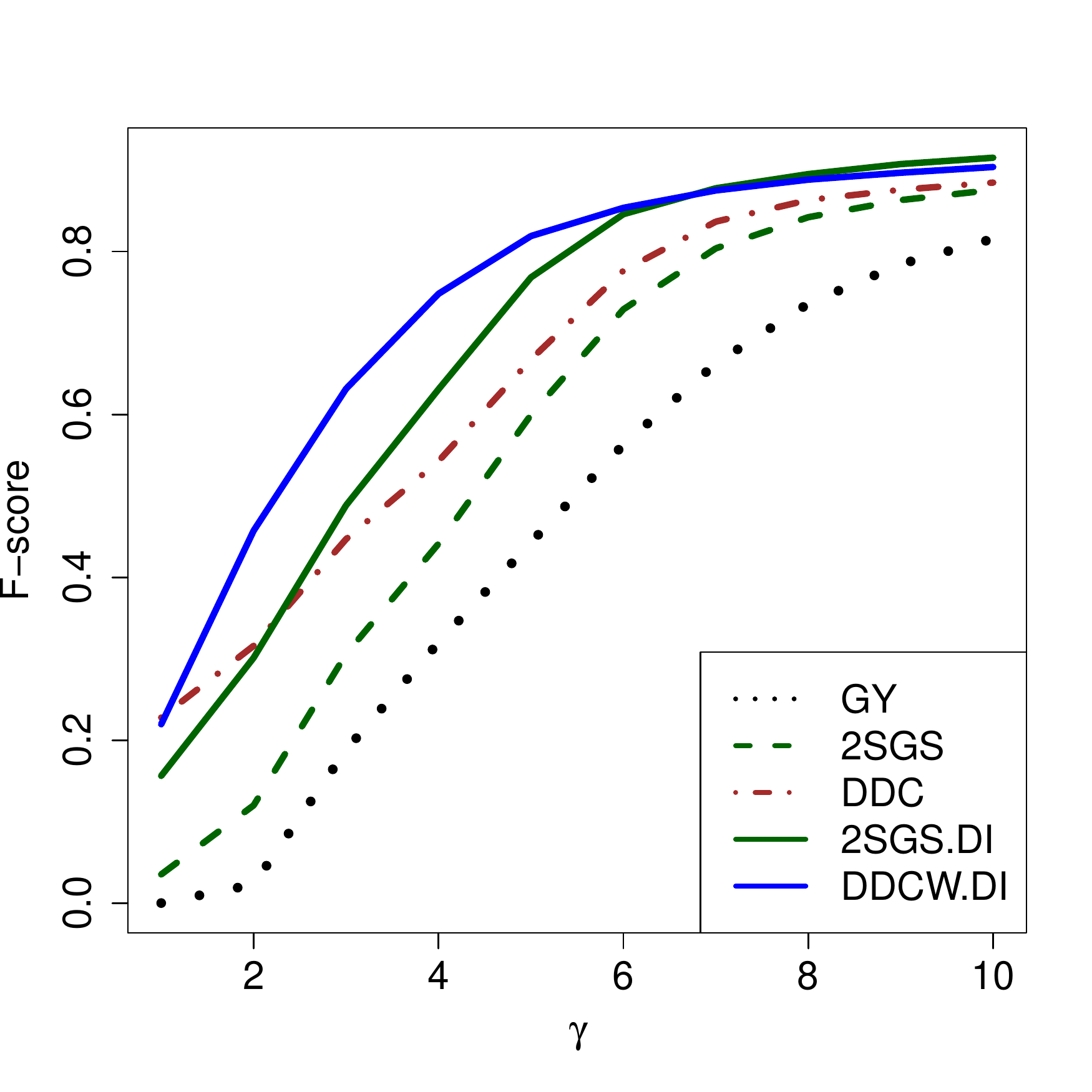}
\vskip-0.5cm
\includegraphics[width = 0.43\textwidth]
 {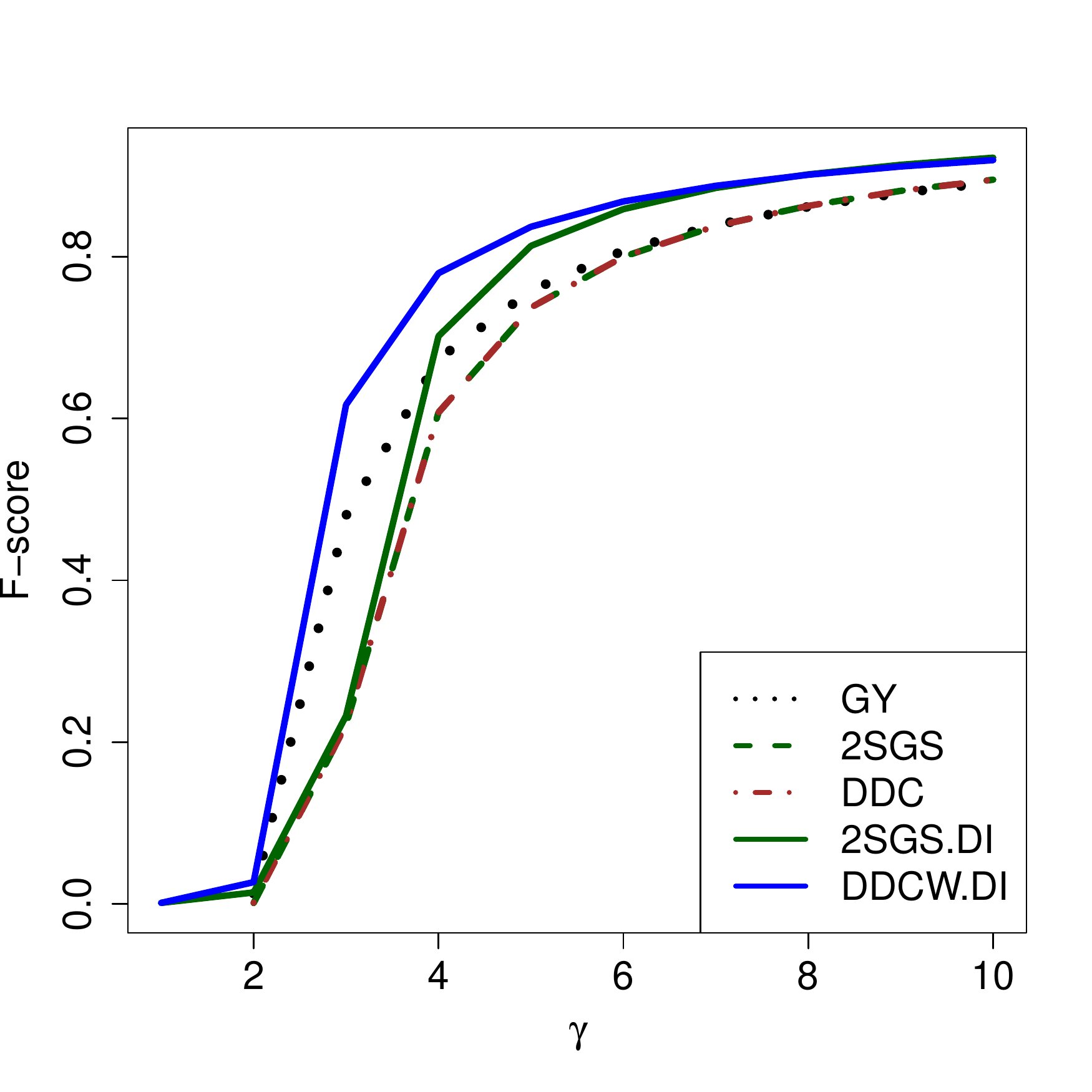}
\includegraphics[width = 0.43\textwidth]
 {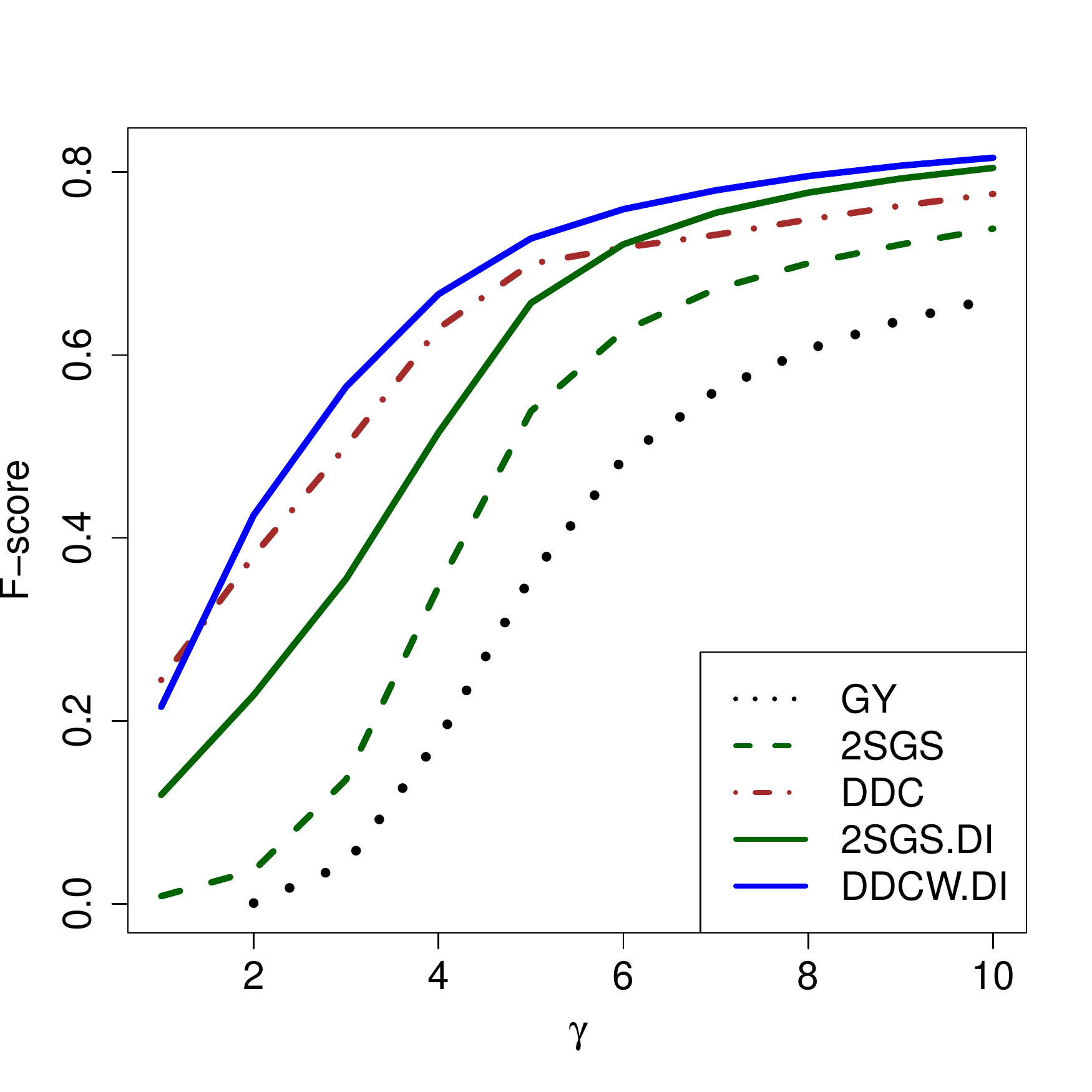}
\vskip-0.5cm
\includegraphics[width = 0.43\textwidth]
 {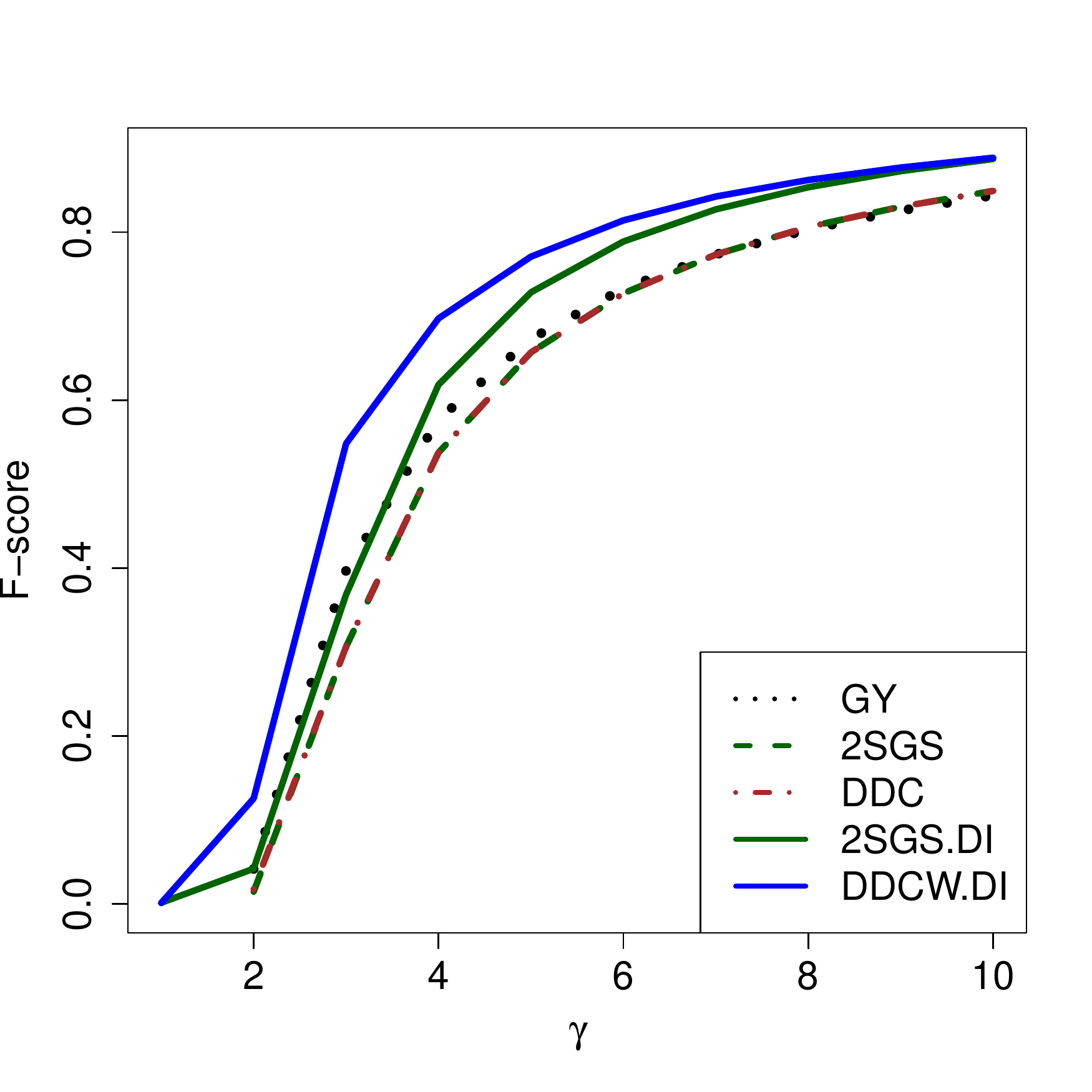}
\includegraphics[width = 0.43\textwidth]
 {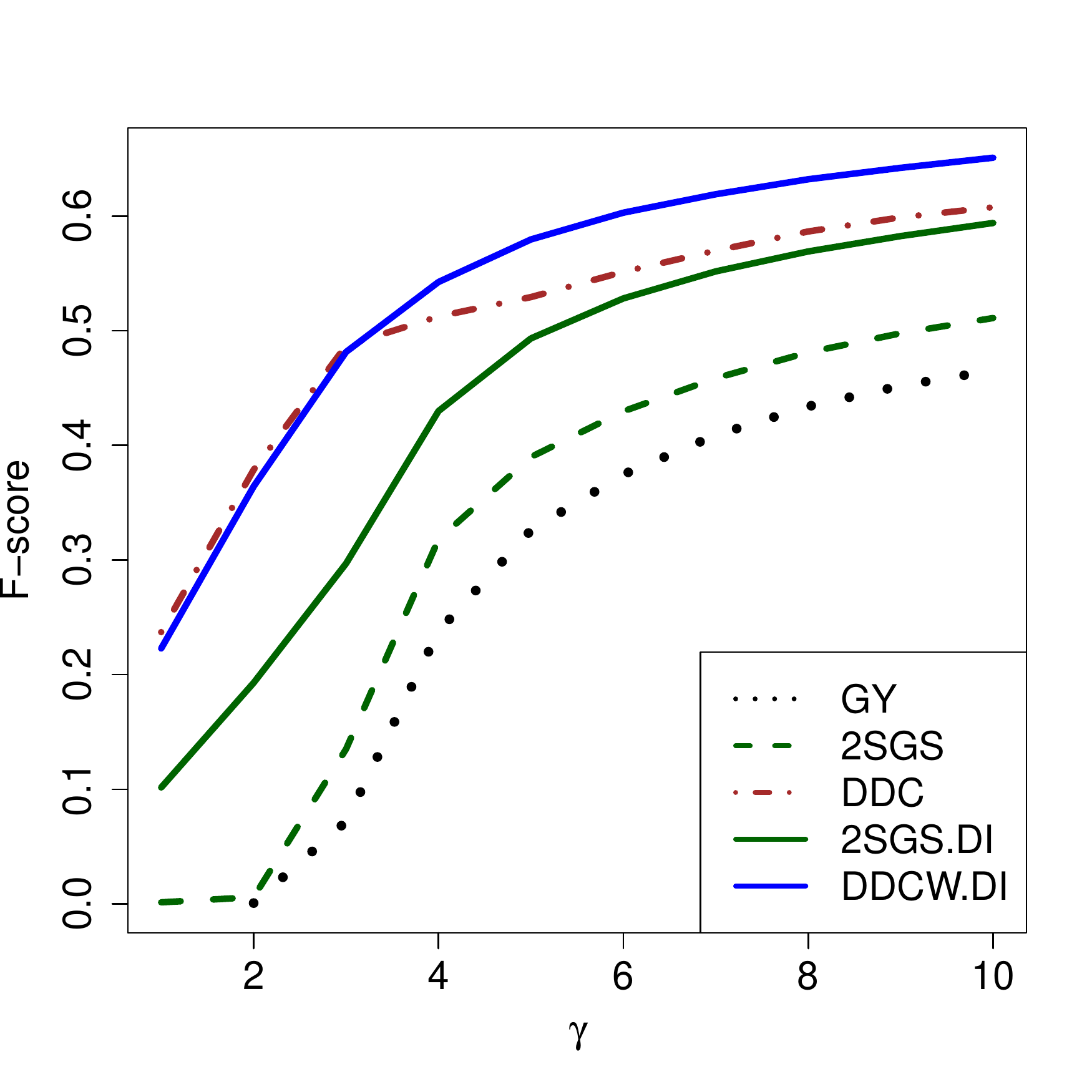}
\vskip-0.2cm
\caption{F-scores of flagged cellwise outliers 
   on data generated by 
	 the contaminated ALYZ model (left) and 
	 the contaminated  A09 model (right), for
	 $(n,d)=(100,10)$ (top),
	 $(n,d)=(400,20)$ (middle), and
	 $(n,d)=(800,40)$ (bottom).}
\label{fig:Fscores}
\end{figure}

\section{Computation times of DI}
\label{A:comptimes}
Table \ref{tab:times} shows computation 
times of the DI algorithm as implemented in the  
\texttt{R} package \texttt{cellWise} on CRAN.
This implementation contains compiled C++ code.
The times are the averaged runtimes of one 
replication of the simulations in
Figure \ref{fig:sim_d10} 
for $(n,d)=(100,10)$ and 
Figure \ref{fig:KLdiv_highdim} 
for $(n,d)=(400,20)$ and $(800,40)$. 
The times are in seconds on a laptop with
Intel Core i7-5600U at 2.60Hz.
We see that DDCW.DI requires less time than 
2SGS.DI because the DDCW initial estimator
is quite fast. Note that the computation
time of the D-step in DI could be reduced by 
executing the LARS computations on the rows 
in parallel.

\begin{table}[ht]
\centering
\caption{Computation time (in seconds) of 
one replication of DI in the simulation.} 
\label{tab:times}
\begin{tabular}{crr}
\hline
 $n$ and $d$ & 2SGS.DI & DDCW.DI \\ 
\hline
 $n=100$, $d=10$ & 1.70 & 0.54 \\ 
 $n=400$, $d=20$ & 16.42 & 5.38 \\ 
 $n=800$, $d=40$ & 110.25 & 35.16 \\ 
\hline
\end{tabular}
\end{table}

\FloatBarrier
\section{Simulation with cellwise and
            casewise outliers}
\label{A:sim}
We now run a simulation in which the
data are contaminated by $10\%$ of cellwise 
outliers generated as in the paper, plus $10\%$ of 
rowwise outliers.
In this particular setting ``rowwise outliers'' 
refers to rows in which all cells are contaminated 
in the same way as before, that is, rows with $d$ 
cellwise outliers.
We generate these outlying rows by the formula
$\bv = \gamma d\sqrt{d}\, \bu'/\MD(\bu,\bmu, 
 \bSigma)$
where $\bu$ is the eigenvector of $\bSigma$ with 
smallest eigenvalue.
This corresponds to the cellwise formula 
of Subsection \ref{subsec:cFsim} in which
the indices of the outlying cells 
$K = \{j(1),\ldots,j(k)\}$ are replaced by
$K = \{1,\ldots,d\}$.
Next, we replace 10\% of the rows by $\bv$, and
afterward sample the positions of the cellwise 
outliers from the remaining 90\% of the rows.
The results are shown in Figures \ref{fig:both_d10}
and \ref{fig:both_highdim}. 
They look qualitatively similar to those in
Figures \ref{fig:sim_d10} and \ref{fig:KLdiv_highdim}
in the paper. 

\begin{figure}[!ht]
\centering
\vskip1cm
\begin{minipage}{0.49\linewidth}
  \centering 
    \textbf{ALYZ, 10\% cells \& 10\% cases,\\
		        $\bm{d = 10}$}
	\includegraphics[width=0.9\textwidth]
	 {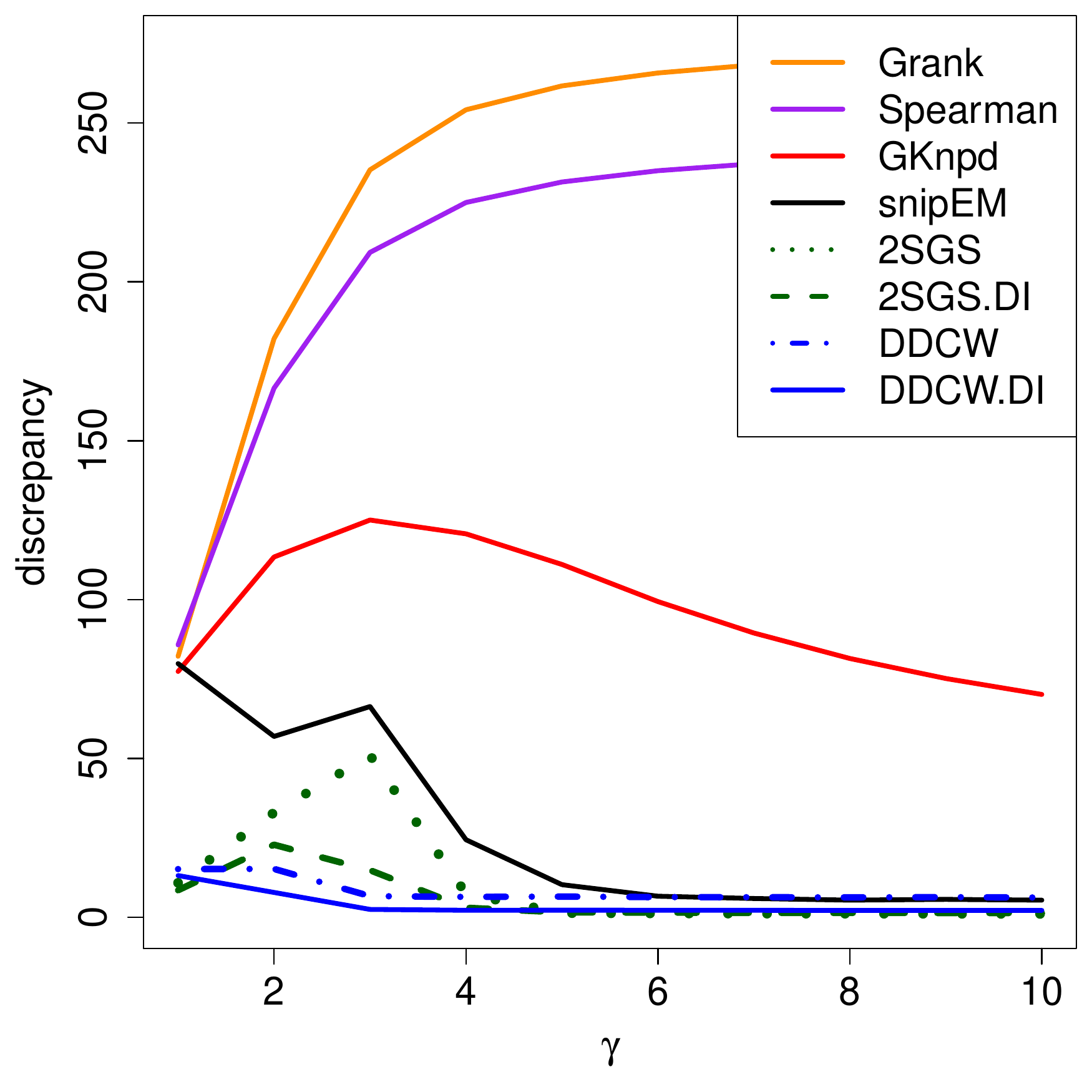} 
\end{minipage}
\begin{minipage}{0.49\linewidth}
  \centering
	  \textbf{A09, 10\% cells \& 10\% cases,\\ 
	         	$\bm{d = 10}$}
  \includegraphics[width=0.9\textwidth]
	 {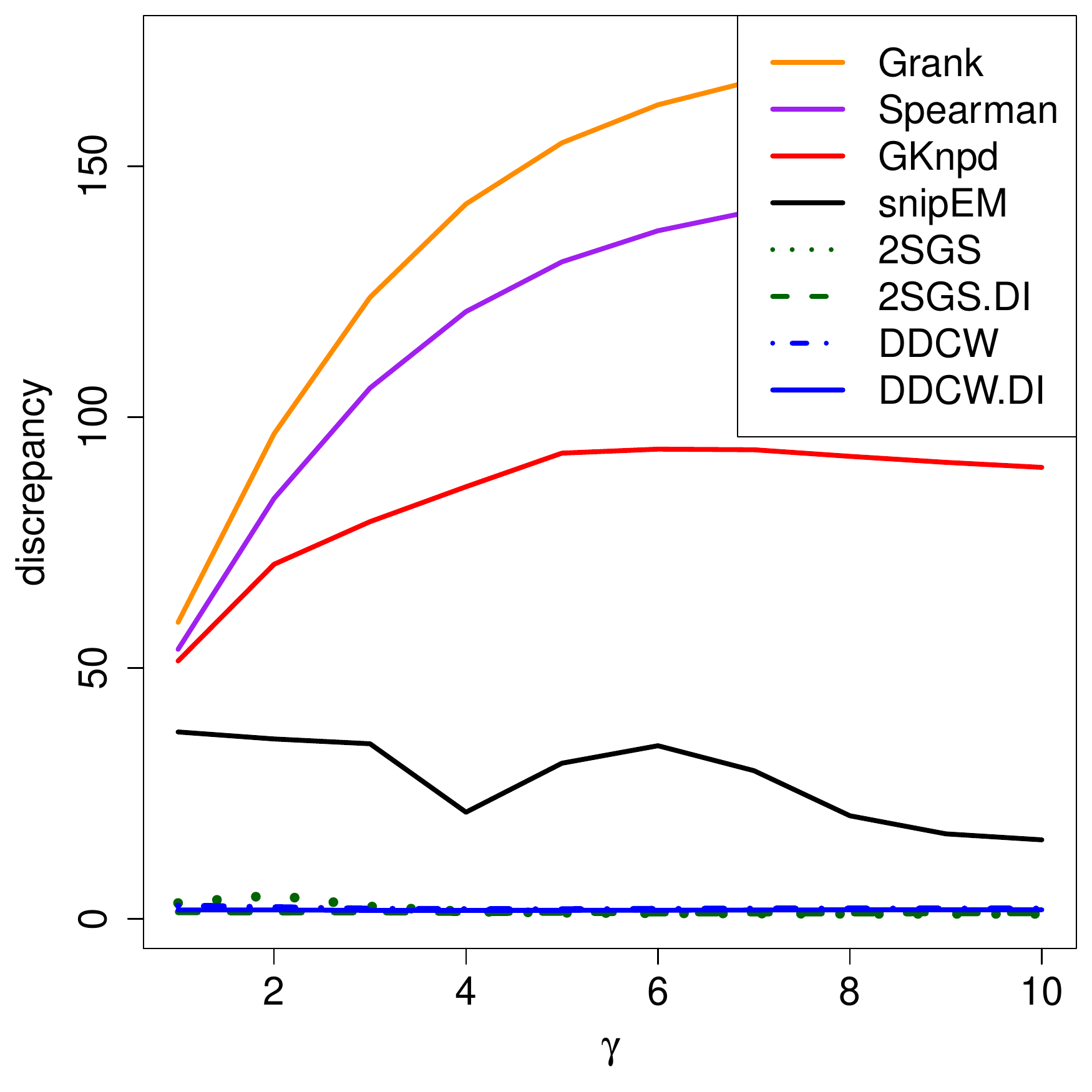} 
\end{minipage}
\vskip-0.3cm
\caption{Discrepancy 
  $D(\widehat{\bSigma},\bSigma)$
  of estimated covariance
  matrices for $d = 10$, $n = 100$.}
\label{fig:both_d10}
\end{figure}

\begin{figure}[!ht]
\centering
\vskip0.5cm
\begin{minipage}{0.49\linewidth}
  \centering 
    \textbf{ALYZ, 10\% cells \& 10\% cases,\\
		        $\bm{d = 20}$}
	\includegraphics[width=0.9\textwidth]
	 {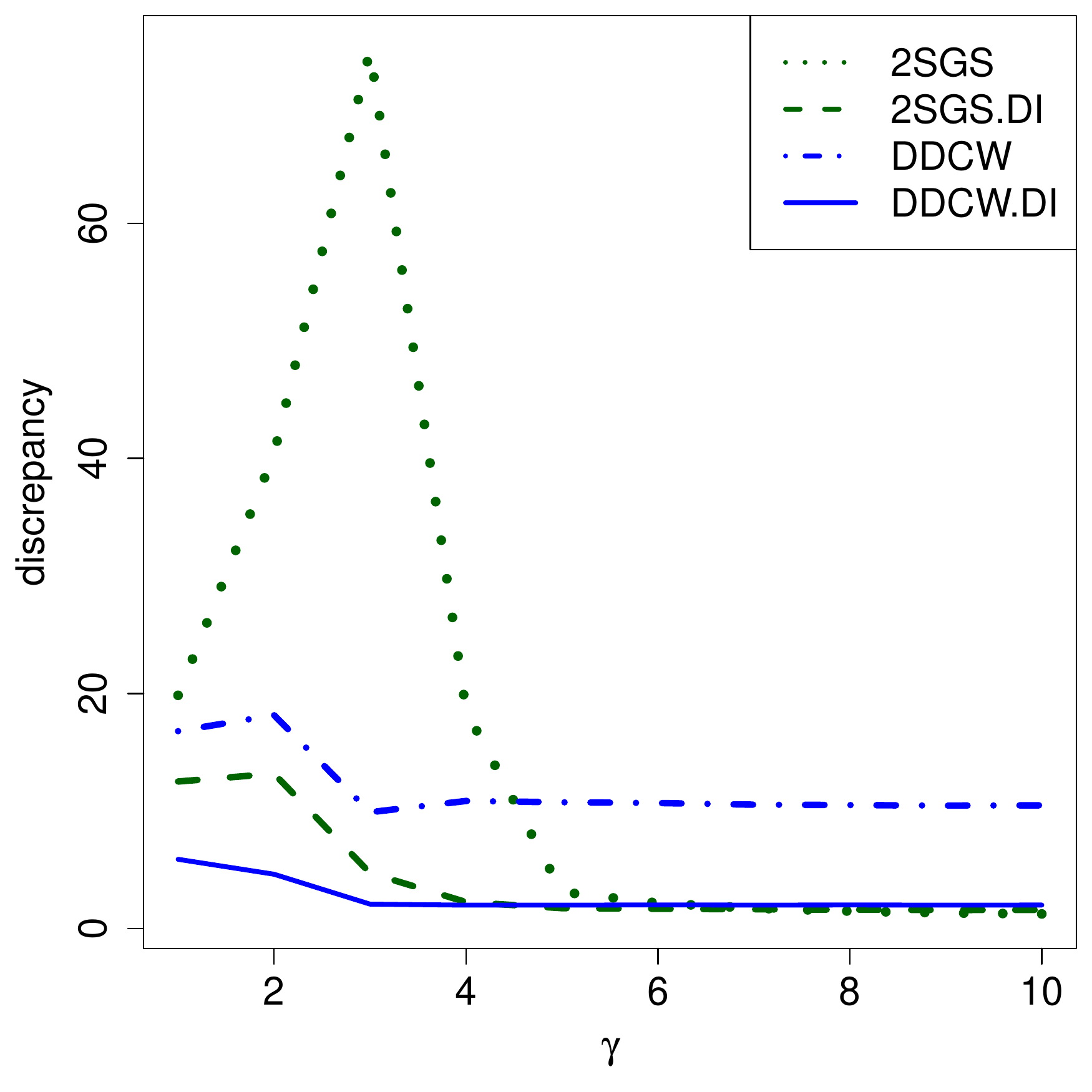} 
\end{minipage}
\begin{minipage}{0.49\linewidth}
  \centering
	  \textbf{A09, 10\% cells \& 10\% cases,\\ 
		        $\bm{d = 20}$}
  \includegraphics[width=0.9\textwidth]
	{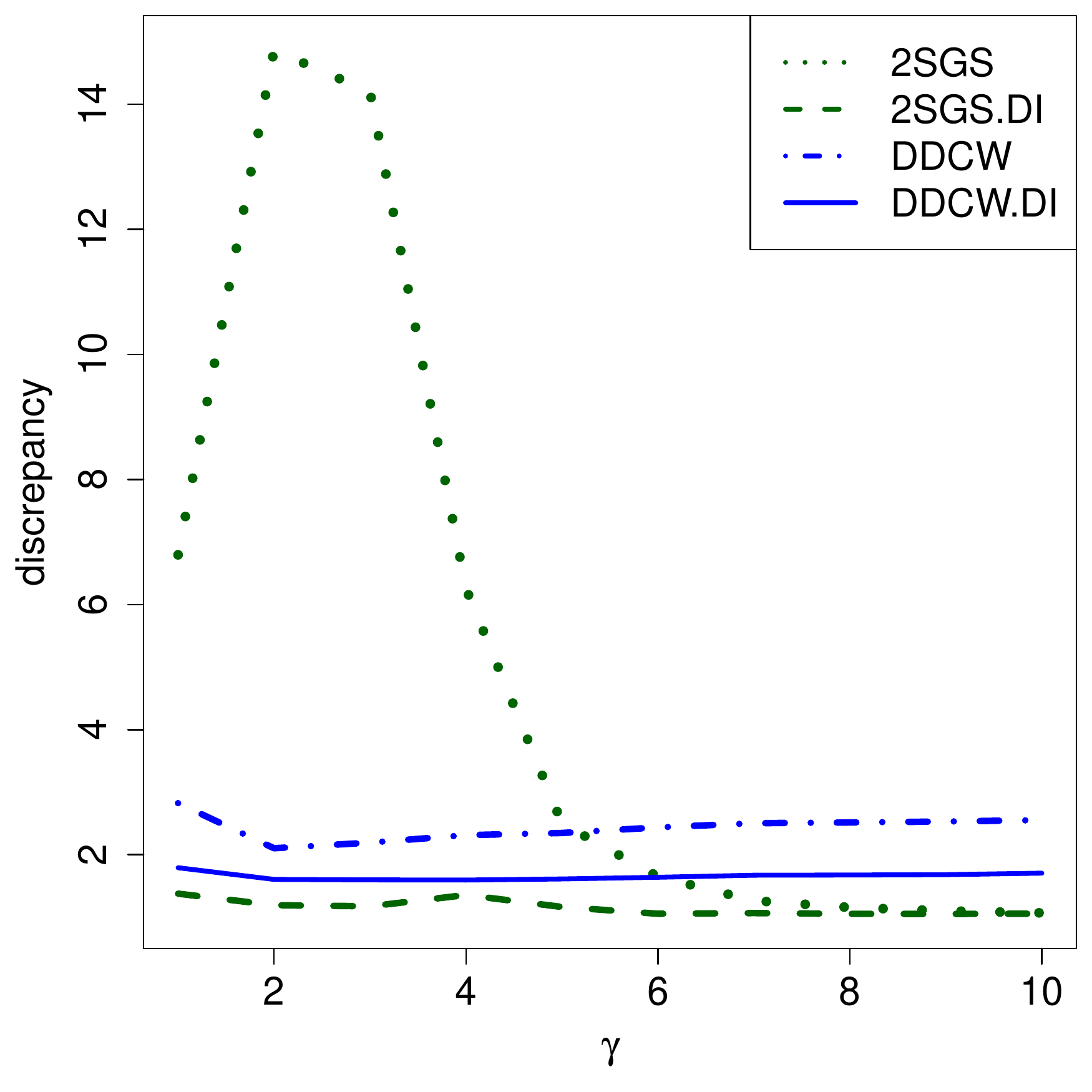} 
\end{minipage}
\vskip0.3cm
\begin{minipage}{0.49\linewidth}
  \centering 
    \textbf{ALYZ, 10\% cells \& 10\% cases,\\ 
		        $\bm{d = 40}$}
	\includegraphics[width=0.9\textwidth]
	 {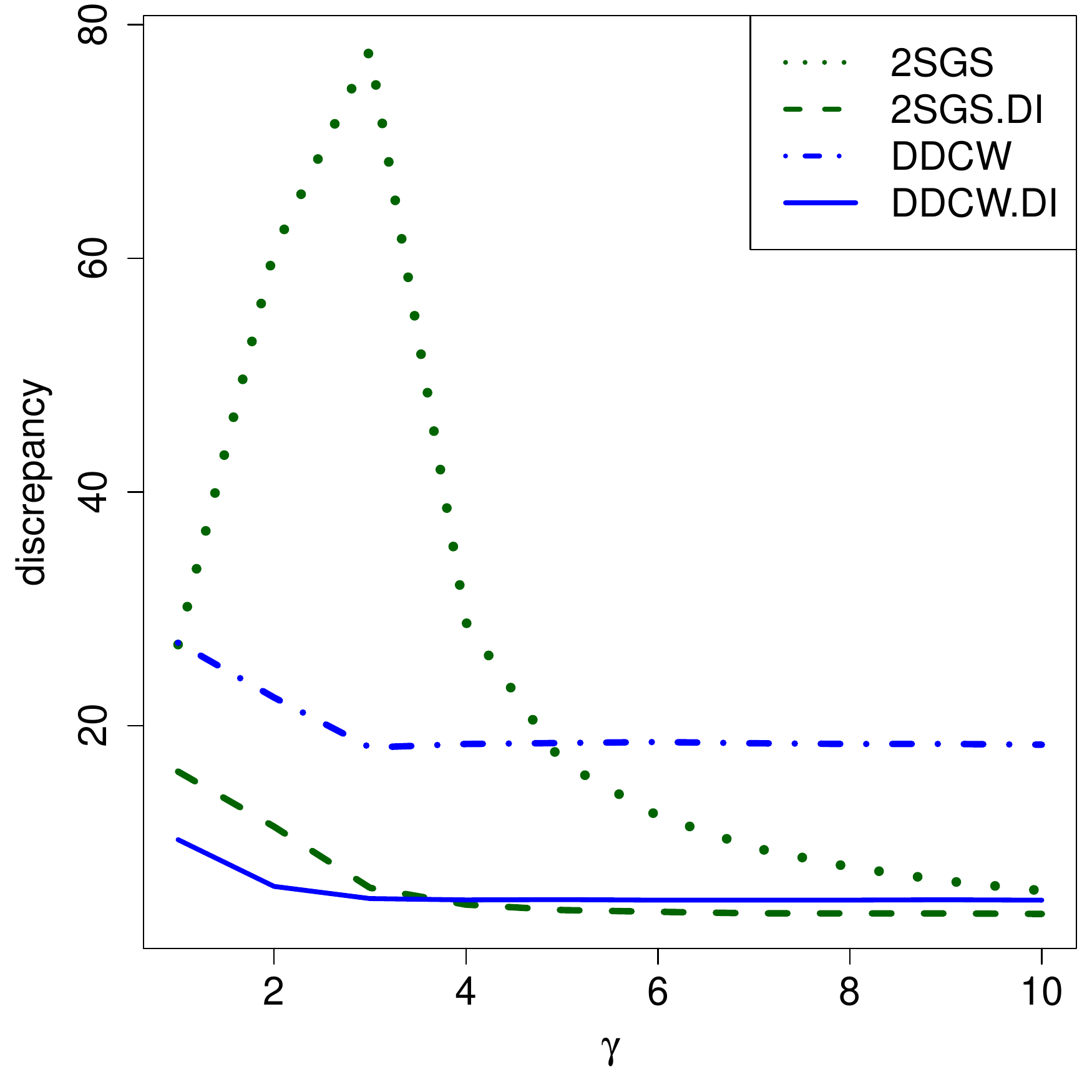} 
\end{minipage}
\begin{minipage}{0.49\linewidth}
  \centering
	  \textbf{A09, 10\% cells \& 10\% cases,\\ 
		        $\bm{d = 40}$}
  \includegraphics[width=0.9\textwidth]
	 {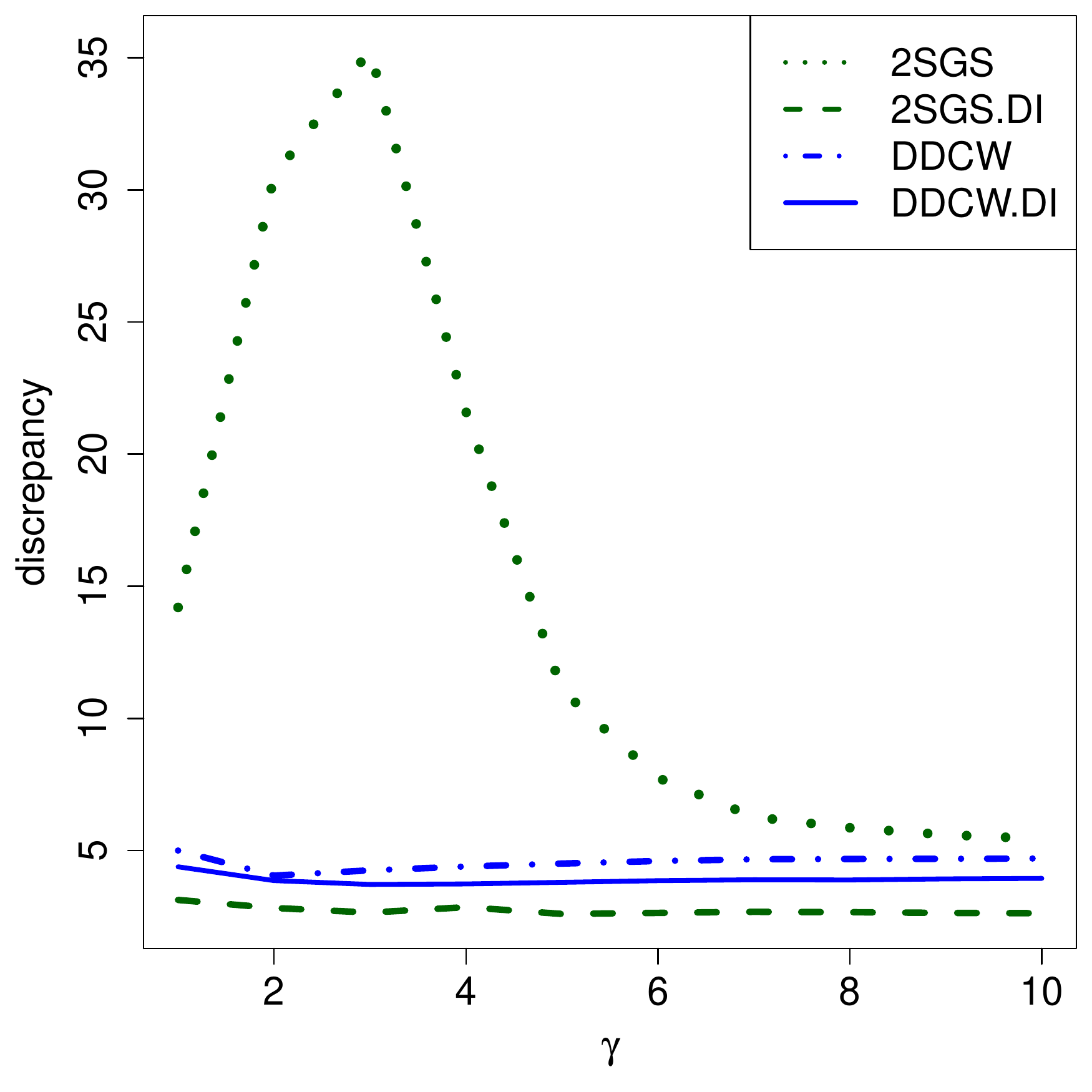} 
\end{minipage}
\vskip-0.3cm
\caption{Discrepancy $D(\widehat{\bSigma},\bSigma)$ 
  given by \eqref{eq:discr} of estimated covariance
  matrices for $d = 20$ and $n = 400$ (top panels) 
  and for $d = 40$ and $n = 800$ (bottom panels).}
\label{fig:both_highdim}
\end{figure}

Note that in this section we do not plot F-scores
for rowwise outliers, because the cellHandler and 
DI algorithms do not have a mechanism for detecting 
rowwise outliers. They are methods for cellwise 
outliers.
If we were to try to detect both types of outliers
simultaneously we would run into an identifiability 
issue. 
For instance, you can easily generate rows which 
would be considered outliers under the rowwise 
paradigm, but only contain a single cellwise 
outlier under the cellwise paradigm. 
And starting from the cellwise paradigm, how many 
cellwise outliers in a row would it take before 
the entire row should be considered outlying?
The identifiability issue is especially complicated
in the current simulations because we generate 
cellwise outliers in a structured way as explained
in Section 2.3, so that they do not stand out 
individually.

Because of these concerns, when 
generating both types of outliers in this section 
our focus is on the accuracy of the covariance 
matrix estimate, so it is clear what the goal is. 
Afterward, the user can choose whether to employ 
the estimated covariance matrix to detect rowwise 
outliers based on their Mahalanobis-type distance, 
or as an input to the cellHandler algorithm to 
detect cellwise outliers.\\

\FloatBarrier
\clearpage
\section{List of volatile organic compounds}
\label{A:VOCs}
The volatile organic compounds (VOC's)
analyzed in Section \ref{sec:example}
are listed below.

\begin{table}[ht!]
\begin{tabular}{ll}
\hline
Variable Name & VOC name \\
\hline
URX2MH & 2-Methylhippuric acid\\
URX34M & 3- and 4-Methylhippuric acid\\
URXAAM & N-Acetyl-S-(2-carbamoylethyl)-L-cysteine\\
URXAMC & N-Acetyl-S-(N-methylcarbamoyl)-L-cysteine\\
URXATC & 2-Aminothiazoline-4-carboxylic acid\\
URXBMA & N-Acetyl-S-(benzyl)-L-cysteine\\
URXCEM & N-Acetyl-S-(2-carboxyethyl)-L-cysteine\\
URXCYM & N-Acetyl-S-(2-cyanoethyl)-L-cysteine\\
URXDHB & N-Acetyl-S-(3,4-dihydroxybutyl)-L-cysteine\\
URXHP2 & N-Acetyl-S-(2-hydroxypropyl)-L-cysteine\\
URXHPM & N-Acetyl-S-(3-hydroxypropyl)-L-cysteine\\
URXIPM3 & N-Acetyl- S- (4- hydroxy- 2- methyl- 2- butenyl)-L-cysteine\\
URXMAD & Mandelic acid\\
URXMB3 & N-Acetyl-S-(4-hydroxy-2-butenyl)-L-cysteine\\
URXPHG & Phenylglyoxylic acid \\
URXPMM & N-Acetyl-S-(3-hydroxypropyl-1-methyl)-L-cysteine\\
\hline
\end{tabular}
\end{table}

\clearpage

\FloatBarrier
\newpage
\section{Analysis of the VOC data after preprocessing}
\label{A:VOCsclr}

As kindly pointed out by a referee, the 
concentrations of compounds in urine samples can 
depend on the dilution of the urine, if the 
analytical chemistry technique did not adjust
for this. 
Therefore, the measurements might not always be 
comparable across the different samples. 
This issue can be dealt with by applying the
centered log ratio (CLR) transform, which 
transforms the original data to log-ratios 
between the variables. 
We reran the analysis of the VOC data 
after applying the CLR transform using the 
\texttt{cenLR} function in the \texttt{R}-package 
\texttt{robCompositions} \citep{templ2011}.

\begin{figure}[!hb]
\center
\vskip-0.4cm
\includegraphics[width = 0.54\textwidth]
  {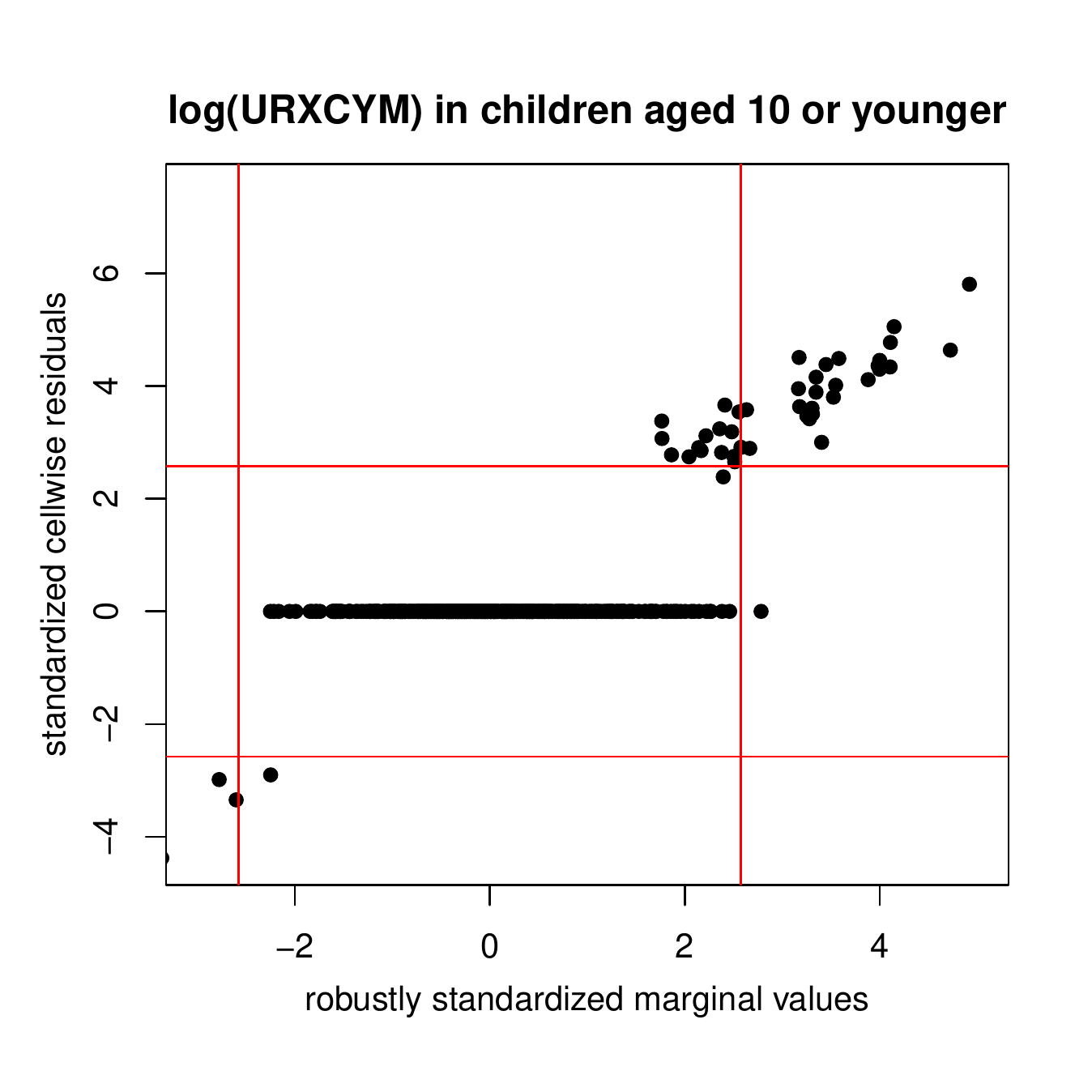}
\vskip-0.4cm
\caption{Plot of standardized cell residuals
of CLR-transformed URXCYM obtained by cellHandler, 
versus the robustly standardized values of
CLR-transformed URXCYM on its own.}
\label{fig:ZresVersusZ_clr}
\end{figure}

\FloatBarrier

Figures \ref{fig:ZresVersusZ_clr} and 
\ref{fig:VOCS_clr} present the results of this 
analysis. In the first plot we see that the 
univariate detection rule flags a few more 
outliers after the CLR transform than without it.
However, the red curve in the second figure still 
shows no effect of adult smokers in the household 
on URXCYM in children. Therefore the conclusion 
that univariate outlier detection is insufficient
here remains valid. 

\begin{figure}[!ht]
\center
\vskip0.2cm
\includegraphics[width = 0.58\textwidth]
  {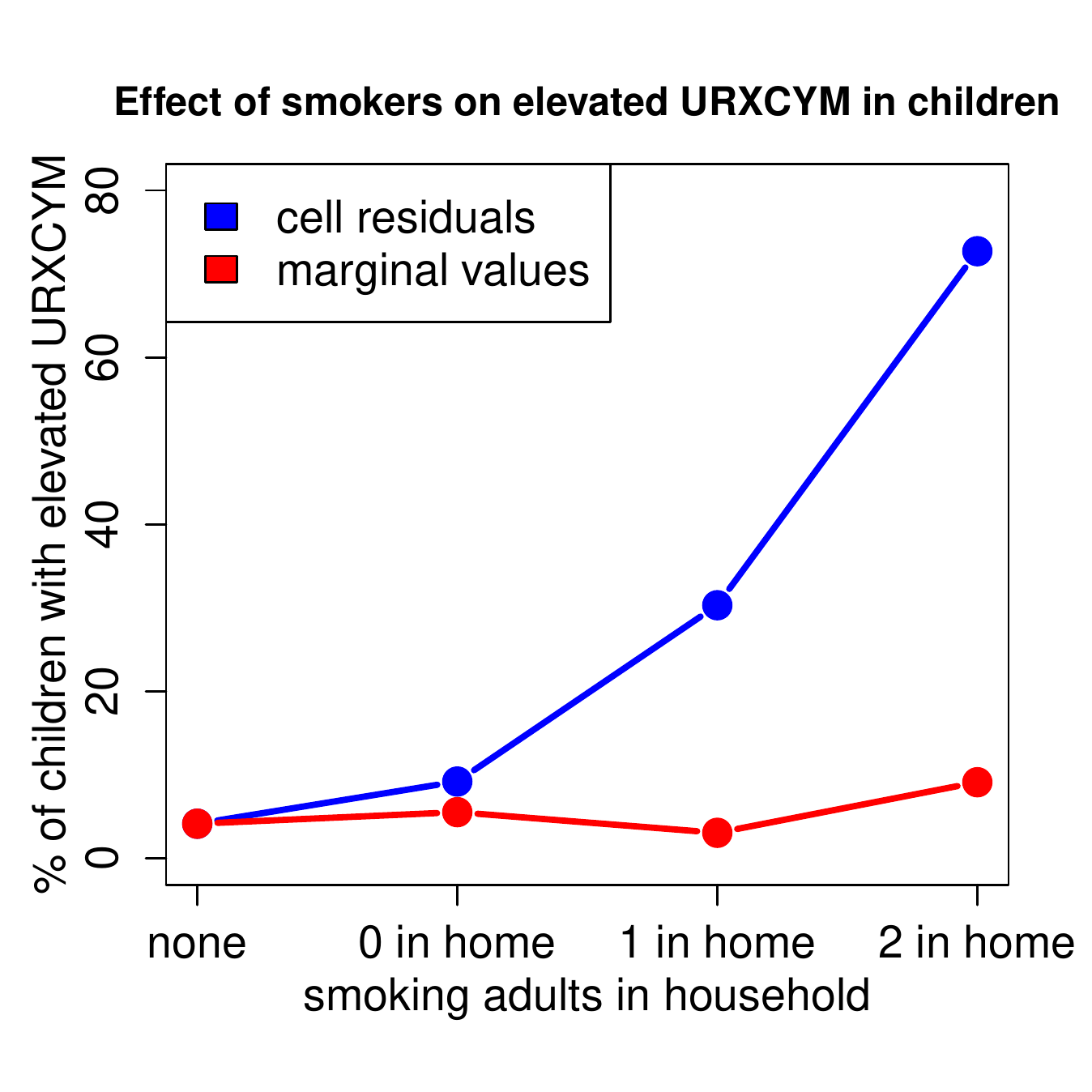}
\vskip-0.2cm
\caption{The blue curve shows the percentage 
of elevated URXCYM cell residuals in function 
of the smoking status of adult family members. 
The red curve shows the percentage of elevated
marginal URXCYM values. Here the URXCYM values 
were preprocessed with the CLR transformation.}
\label{fig:VOCS_clr}
\end{figure}

\end{appendix}

\end{document}